\newtheorem{lemma}{Lemma}
\begin{document}

\title{\huge{Joint Precoding for Active Intelligent Transmitting Surface Empowered Outdoor-to-Indoor Communication in mmWave Cellular Networks}}
\author{Xie Xie, %~\IEEEmembership{Student Member,~IEEE,}
        Chen He, %~\IEEEmembership{Member,~IEEE,}
        %Xiaoya Li, 
       Feifei Gao,~\IEEEmembership{Fellow,~IEEE,}
        %~\IEEEmembership{Life~Fellow,~IEEE}%
        Zhu Han,~\IEEEmembership{Fellow,~IEEE,}
        \\and Z. Jane Wang,~\IEEEmembership{Fellow,~IEEE}
\thanks{Xie Xie and Chen He are with the School of Information Science and Technology, Northwest University, Xi'an, 710069, China. Corresponding author: Chen He (email: chenhe@nwu.edu.cn).
Feifei Gao is with the Department of Automation, Tsinghua University, Beijing, 100084, China.
Zhu Han is with the Department of Electrical and Computer Engineering, University of Houston, TX, USA.
Z. Jane Wang is with the Department of Electrical and Computer Engineering, The University of British Columbia, Vancouver, BC V6T1Z4, Canada.}
}
\maketitle
% \markboth{}
% {Shell \MakeLowercase{\textit{et al.}}: Bare Demo of IEEEtran.cls for IEEE Journals}

%\maketitle

\begin{abstract}
Outdoor-to-indoor communications in millimeter-wave (mmWave) cellular networks have been one challenging research problem due to the severe attenuation and the high penetration loss caused by the propagation characteristics of mmWave signals. 
We propose a viable solution to implement the outdoor-to-indoor mmWave communication system with the aid of an active intelligent transmitting surface (active-ITS), where the active-ITS allows the incoming signal from an outdoor base station (BS) to pass through the surface and be received by the indoor user-equipments (UEs) after shifting its phase and magnifying its amplitude.
%Besides, the superiority of the active-ITS can greatly reduce the number of elements compared with the passive-ITS for achieving a given performance, which makes it suitable for application to size-limited or aesthetic-needed scenarios.
Then, the problem of joint precoding of the BS and active-ITS is investigated to maximize the weighted sum-rate (WSR) of the communication system. An efficient block coordinate descent (BCD) based algorithm is developed to solve it with the suboptimal solutions in nearly closed-forms. In addition, to reduce the size and hardware cost of an active-ITS, we provide a block-amplifying architecture to partially remove the circuit components for power-amplifying, where multiple transmissive-type elements (TEs) in each block share a same power amplifier. Simulations indicate that
active-ITS has the potential of achieving a given performance with much fewer TEs compared to the passive-ITS under the same total system power consumption, which makes it suitable for application to the size-limited and aesthetic-needed scenario, and the inevitable performance degradation caused by the block-amplifying architecture is acceptable.
\end{abstract}

\begin{IEEEkeywords}
Active intelligent transmitting surfaces, millimeter-wave, power amplification.
\end{IEEEkeywords}

\IEEEpeerreviewmaketitle

\section{Introduction}
Millimeter-wave (mmWave) cellular networks are capable of supplying the ever-increasing demand of rates for advanced fifth-generation communications thanks to their abundant available bandwidth.
However, a fundamental challenge for mmWave communications is that the mmWave signal experience a severe attenuation and a high penetration loss compared
with the lower frequency bands, which makes mmWave signals highly vulnerable to obstacles \cite{7400949, 7105406,6834753}.
The emerging technique of
reconfigurable intelligent surfaces (RISs) have been proposed as a promising candidate for alleviating the unfavorable properties of mmWave signals.
RIS is an ultra-thin metasurface comprising multiple programmable elements, which enables to achieve a high beamforming gain by smartly manipulating the incident signal for proactively customizing the radio propagation environment \cite{9110915,9086766,9140329}.  
More importantly, RIS can significantly reduce the outage caused by the presence of random blockages through establishing virtual line-of-sight (LoS) links between base stations (BSs) and user equipments (UEs), which can considerably enhance the reliability of mmWave communications, especially in harsh urban propagation environments \cite{9314027,9234098,9505311}.
These benefits have inspired a lot of work to investigate RIS-assisted mmWave communication networks and verify that RIS in favor of enhancing the signal strength, extending the service range, and improving the spectral- and energy-efficiency \cite{9390351,9629293,9680686,9310290}. 

%(measurements have shown that around 40 and 28 dB for tinted-glass and brick \cite{6655403}, respectively)
On the other hand, outdoor-to-indoor communication in mmWave cellular networks is a common communication scenario as the most mobile data traffic is consumed indoors, however, which is challenging since BSs and UEs are located on the opposite side of building structures \cite{7381698}, e.g., walls and windows, while penetrating them leads a severe attenuation (measurements have shown that around 40 and 28 dB for tinted-glass and brick \cite{6655403}, respectively), which limits the feasibility for an outdoor BS communicates with indoor UEs inside the buildings.
Relay enabled system is a potential solution for outdoor-to-indoor mmWave communications but has some drawbacks, e.g., expensive hardware components and high signal processing complexity \cite{8844988}, while these shortcomings of relays can be overcome by utilizing the superiority of the RISs \cite{9514544,8930608}.
However, the widely studied RISs-assisted systems focused on traditional reflective-type RISs, also referred to intelligent reflecting surfaces (IRSs) \cite{9591503}, as objects hanging on walls or facades of buildings, which face a placement restriction: both BSs and UEs have to be located on the front side of an IRS, and hence IRSs can only achieve a half-space coverage and are not up to the outdoor-to-indoor communication. 
As a remedy, some works \cite{9513283,9362274} innovative deployed multiple IRSs in a cooperative multi-hop manner to bypass the obstacle. Nevertheless, this does not apply to
outdoor-to-indoor communications due to no matter where IRSs are placed, the signal after reflecting cannot bypass the wall to UEs behind the IRSs \cite{9514544,9591503}.

% the RIS-assisted outdoor-to-indoor communications is still an open research and has not been addressed yet. 

%Similar to IRSs, ITSs also have a huge potential and a wide range of applications \cite{9598898}.

%concrete Hun Ning Tu

To challenge this restriction and facilitate more flexible deployment of RIS, the novel concept of transmissive(refractive)-type RIS, also named intelligent transmitting surfaces (ITS) has been proposed \cite{9365009,9598898}.
ITS is a meta-material reflection-less surface structure \cite{8058505}, which is equipped with a large number of transmissive-type elements (TEs), provides beam shaping, steering, and focus capabilities \cite{pfeiffer2013metamaterial}. A key feature of ITS is that the incoming signal can pass through each TE, and then the refracted signal can be received by UEs  behind the ITS.
More importantly, ITSs can be flexibly deployed by coexisting with existing infrastructures in the middle of a communication environment, such as embedded in walls or windows between two different environments (outdoor-to-indoor, room-to-room, etc.) \cite{9591503,9514544}. 
ITSs can fully unleash the potential of RISs on breaking the half-space (reflection space) limitation of signal propagation manipulated by IRS, and so that covers the back side space (transmission space)  
\footnote{It is worth noted that by integrating IRS and ITS together, a highly flexible full-space manipulation of signal propagation can be achieved \cite{9365009,9690478,aldababsa2021simultaneous}. In other works, the authors proposed a pair of novel resemble prototypes, simultaneously transmitting and reflecting RISs (STAR-RISs) \cite{9570143} and intelligent Omni-surface (IOSs) \cite{9491943}, where the surfaces can simultaneously transmit and reflect the incident signals. ITS or IRS can be a special full transmission or reflection mode of these concepts \cite{9690478}.}.

%circumvent the  multiplicative path loss effect

However, identically with the reflective cascaded channel in the IRS-assisted communication, the transmissive cascaded channel in the ITS-assisted communication will still suffer a multiplicative fading effect \cite{zhang2021active}, which potentially causes the ITS achieve a poor performance gain.
A common approach to improve the performance gain of RISs is increasing the number of reflective-type elements (REs) or TEs \cite{9427474}. Intuitively speaking, this approach is feasible for the IRS but might not be practical for the ITS due to that the number of TEs is limited by the physical size of windows and the aesthetic appearance effect, which might be the price to be paid for that the ITS is embedded with existing building structures \cite{9591503}.  
%Additional, due to the penetration loss of the mmWave signal across the building walls makes the direct link between the BS and the UE completely blocked or very weak. 
Additionally, the substantial electromagnetic penetration loss when the mmWave signal impinges upon and penetrates the ITS cannot be negligible \cite{9514544}. 
These concerns imply that the performance gain reaped by the ITS might be worthy weak. 

%which makes the square-order array gains reaped by the ITS is small.
% \begin{figure}
%     \centering
%     \includegraphics[width=.5\linewidth]{}
%     \caption{Signal propagation types manipulated by RIS.}
%  \label{fsignal} 
% \end{figure}

%To circumvent above issues of passive-ITS, a novel concept of active-ITS is proposed in this paper.  
To this end, the deployment of an active-ITS will bring a notable leap forward for improving performance in the outdoor-to-indoor mmWave communication. 
The noticeable difference between the active-ITS and the passive-ITS is that the former can refract the incoming signals after amplifying amplitudes and shifting phases simultaneously, rather than only refracting them after changing phases as done by the latter, which offer an extra flexibility to reconfigure the incident signal. Identical to the principle of active REs \cite{9652031,9723093,9377648}, an active TE equips with an additional power amplifier for magnifying the signal amplitude \cite{9530403}. In this way, active-ITS brings an extra degree of freedom (DOF) in the beamforming, and the number of TEs of active-ITS can be beneficially reduced compared with the passive-ITS with a same performance \cite{9734027,9530750}.
The active-ITS seems to act as an active concave lens \cite{9310290} that directly refracts the incident signal with power amplification at the electromagnetic level. Nevertheless, it still takes the advantages of the passive-ITS that does not have expensive and power-consuming ratio-frequency (RF) chains, and hence has no capability for signal processing \cite{tasci2021amplifying,9733238,DBLP:journals/corr/abs-2006-06644}, which is fundamentally different from conventional amplify-and-forward (AF) relays \cite{zhang2021active}.  

%Motivated by the above discussions, we attempt to utilize the significantadvantages of active-ITS to enhance the system performance in outdoor-to-indoor communications in the mmWave band.

Motivated by the above background, we aim to utilize the superiority of an active-ITS to alleviate the negative effects of the mmWave signal so that to implement the challenging but important outdoor-to-indoor communication system in the mmWave cellular network with a high performance. 
%where active-ITS is used to counteract the performance gain by the limited number of TEs, the deep attenuation of mmWave signals, and the electromagnetic penetration.
%We propose a novel concept of active-ITSs, which can transmit the incoming signal to the back side of an ITS and offer an extra flexibility  to reconfigure the incident signal, i.e., it can not only manipulate its phase but also magnify its amplitude.
Particularly, the main contributions of this paper are summarized as follows:
\begin{itemize}
    \item To our best knowledge, this is the first work to implement outdoor-to-indoor mmWave communication with the aid of an active-ITS, where the incoming signal can be refracted to the back side of the active-ITS after magnifying its amplitude and manipulating its phase.
    \item To jointly optimize the precoding matrix of both the BS and active-ITS in the outdoor-to-indoor mmWave communication system, we formulate a weighted sum-rate (WSR) maximization problem, which is non-convex and challenging to obtain its corresponding globally optimal solution. 
    Consequently, we propose a compromising approach to solve it, where the original problem is transformed into an equivalent form, and then a block coordinate
    descent (BCD) based algorithm is employed to obtain the suboptimal solutions in nearly closed-forms of the linear precoding matrix of the BS, and the power amplification factor matrix and the transmissive phase-shifting matrix of the active-ITS.
    \item In order to reduce the size and hardware cost of the active-ITS, we provide a block-amplifying architecture to partially remove the circuit components for power-amplifying, where multiple TEs in each block share a same power amplifier. 
    Then, we extend the proposed algorithm under the element-amplifying architecture to jointly optimize the precoding matrix under the block-amplifying architecture. 
    \item Simulation results demonstrate that the active-ITS can significantly improve the WSR performance of the outdoor-to-indoor mmWave communication system, and the inevitable performance loss caused by the block-amplifying ITS architecture is acceptable. 
    %Moreover, the active-ITS should to be deployed close to UEs for achieving a higher WSR.
\end{itemize}

The organization of the rest of this work is organized as follows. In Section \ref{sec2}, we describe the active-ITS empowered outdoor-to-indoor communications in the mmWave cellular network model, and formulate the WSR maximization problem. In Section \ref{sec3}, we propose an efficient joint  design algorithm to solve it. Then, in Section \ref{sec3block}, the proposed algorithm is extended to solve the problem with the block-amplifying ITS architecture.
Simulation results are presented in Section \ref{sec4}, and finally the paper is concluded in Section \ref{sec5}.

\textit{Notations:} Scalars, vectors, and matrices are presented by lower-case, bold-face lower-case, and bold-face upper-case letters, respectively. $\mathcal C\mathcal N\left(\mathbf 0, \mathbf I\right)$ denotes the circularly symmetric complex Gaussian (CSCG) distribution with zero mean and covariance matrix $\mathbf I_{N}$, where $\mathbf I_{N}$ denotes an $N\times N$ identity matrix. $\operatorname{Vecd}\left\{ \cdot \right\}$ forms a vector out
of the diagonal of its matrix argument. $ \circ $ denote the Hadamard products. $\mathbf A^{\operatorname{H}}$ and $\operatorname{Tr}\left(\mathbf A\right)$ denote the Hermitian and trace operators of matrix $\mathbf A$, respectively. $\operatorname{Re}\left\{a\right\}$ is the real part of a scalar $a$.

\section{SYSTEM MODEL AND PROBLEM FORMULATION}\label{sec2}
\begin{figure}
    \centering
    \includegraphics[width=.82\linewidth]{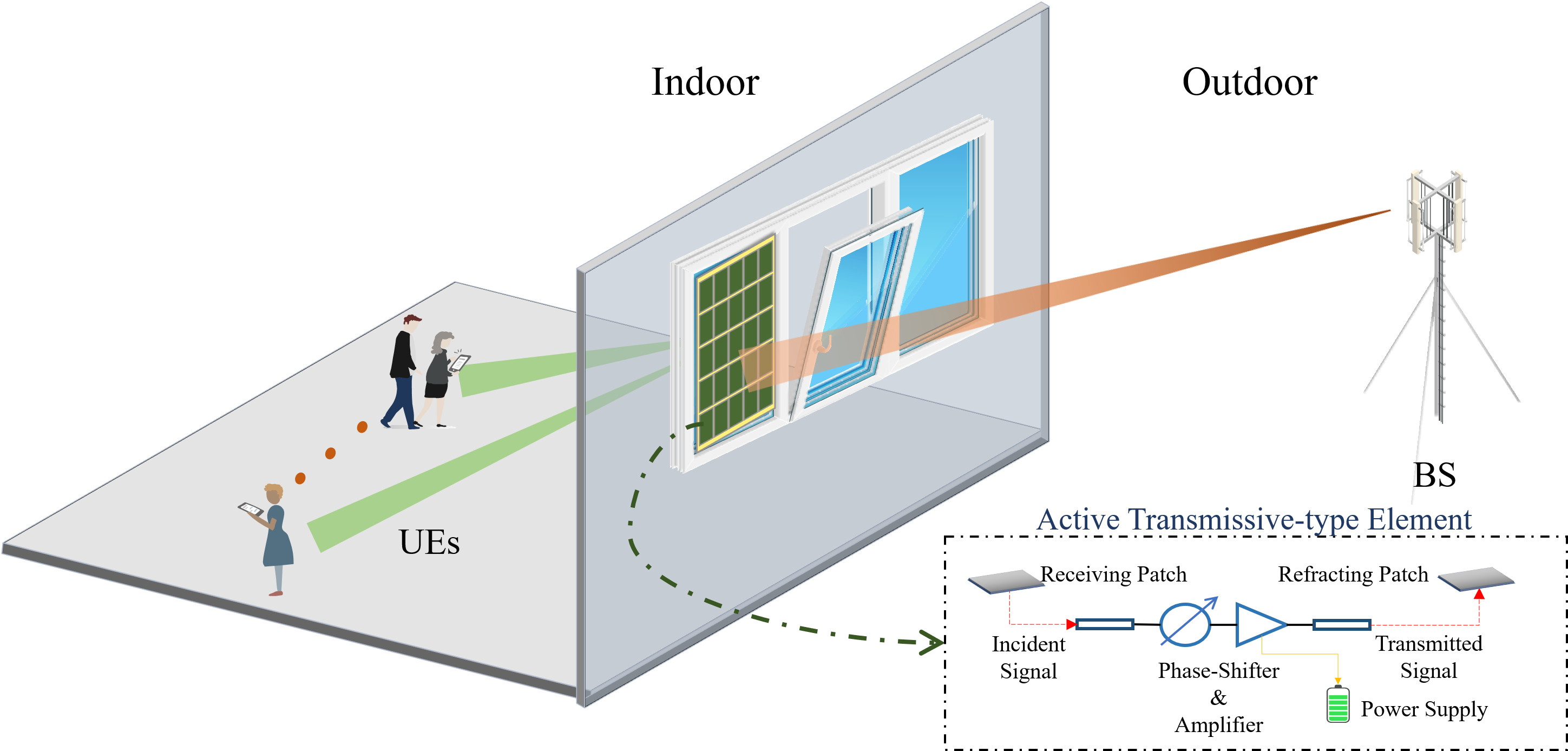}
    \caption{A system model schematic of the active-ITS empowered  outdoor-to-indoor mmWave communication.}
 \label{f1} 
\end{figure}

As shown in Fig. \ref{f1}, we consider the downlink of an outdoor-to-indoor mmWave communication scenario, where the direct-link channels from the BS to UEs are assumed severely blocked by the concrete-wall and the tinted-glass, and hence, fall into almost complete outage \cite{9390351}. Thanks to the transmissive characteristic of the ITS, the incoming mmWave signal can pass through the surface when impinge upon it, and hence, the signal can be refracted from the outdoor BS to the indoor UEs through the ITS. To alleviate the severe attenuation of mmWave signals and greatly reduce the number of TEs of the ITS for application to the size-limited window, we assume each active TE of the ITS can not only shift the phase of the incident signal, but also amplify the amplitude of the incident signal with the aid of the power supply \cite{9530403}. 
In addition, a smart controller is attached to the active-ITS and responsible to operate the active-ITS with the transmissive coefficients, which are coordinated by the BS \cite{9365009,9090356,9279253,9680675}.

%the ITS is supported by an external power supply. Consequently, 
\subsection{Channel Model}

Let $M_t$, $M_r$, and $N$ denote the number of the transmitting antennas, the receiving antennas, and the TEs equipped by the BS, UEs, and active-ITS, respectively. The complicated uniform planar arrays (UPA) antenna configuration is employed at the BS, UEs, and active-ITS, which is more practical than uniform linear array (ULA) for RIS-assisted systems\cite{9629293}. In addition, due to the low diffraction from objects of mmWave signals, mmWave channels are usually characterized according to the widely used Saleh-Valenzuela model \cite{9390351,9629293,9680686,9310290}. {Hence, the channel matrices from the BS to the active-ITS and from the active-ITS to the $k$-th UE are denoted by $\mathbf G \in \mathbb C^{N\times M_t}$ and $\mathbf H_k\in \mathbb C^{N\times M_r},\forall k \in K$, respectively, which can be mathematically determined as follows}
\begin{subequations}
\begin{align}
    \mathbf G&=\sqrt{\frac{M_tN}{L_{\text{BI}}}}\sum\nolimits_{l=1}^{L_{\text{BI}}}{\alpha_l\mathbf a_{\text{ITS}}\left(\nu_l^{\text{AOA}},\vartheta_l^{\text{AOA}}  \right)\mathbf a_{\text{BS}}^{\operatorname{H}}\left(\nu_l^{\text{AOD}},\vartheta_l^{\text{AOD}} \right)},\\
    \mathbf H_k&=\sqrt{\frac{M_rN}{P_{\text{IU}}}}\sum\nolimits_{p=1}^{P_{\text{IU}}}{\beta_p\mathbf a_{\text{UE}}\left(\nu_p^{\text{AOA}},\vartheta_p^{\text{AOA}}  \right)\mathbf a_{\text{ITS}}^{\operatorname{H}}\left(\nu_p^{\text{AOD}},\vartheta_p^{\text{AOD}} \right)}, \forall k\in K,
\end{align}
\end{subequations}
where $L_{\text{BI}}$ and $P_{\text{IU}}$ are the number of propagation paths between the BS and the active-ITS, and between the active-ITS and the $k$-th UE, and where variables $\alpha_l$ and $\beta_p$ denote the complex gains for links, with $l=1$ and $p=1$ stand for the LoS path and the remaining are NLoS paths. $\mathbf a_{\text{BS}}\left(\nu_l^{\text{AOD}},\vartheta_l^{\text{AOD}} \right)$ and $\mathbf a_{\text{ITS}}\left(\nu_p^{\text{AOD}},\vartheta_p^{\text{AOD}}\right)$ are transmit array response vectors of the BS and the active-ITS, $\mathbf a_{\text{UE}}\left(\nu_l^{\text{AOA}},\vartheta_l^{\text{AOA}} \right)$ and $\mathbf a_{\text{ITS}}\left(\nu_p^{\text{AOA}},\vartheta_p^{\text{AOA}}\right)$ are receive array response vectors of the UEs and the active-ITS, respectively, where $\nu^{\text{AOA}}$ ($\nu^{\text{AOD}}$) and $\vartheta^{\text{AOA}}$ ($\vartheta^{\text{AOD}}$) denote the azimuth and elevation angles of arrival (departure), respectively. The typical array response vector for the UPA model can be mathematically expressed by
% \begin{align}\label{aresponse}
%     \mathbf a\left(\nu,\vartheta \right)=\frac{1}{\sqrt{HV}}\left[1,\cdots,e^{j\frac{2\pi d}{\lambda}\left(h\sin\left(\vartheta\right)\sin\left(\nu\right)+v\cos\left(\vartheta\right)\right)},\cdots,e^{j\frac{2\pi d}{\lambda}\left(\left(H-1\right)\sin\left(\vartheta\right)\sin\left(\nu\right)+\left(V-1\right)\cos\left(\vartheta\right)\right)}\right]^{\operatorname{T}},
% \end{align}
\begin{align}\label{aresponse}
    \mathbf a\left(\nu,\vartheta \right)=&\frac{1}{\sqrt{HV}}\left[1,\cdots,e^{j\frac{2\pi d}{\lambda}\left(h\sin\left(\vartheta\right)\sin\left(\nu\right)+v\cos\left(\vartheta\right)\right)},\right.\notag\\
    &\qquad\qquad\qquad\qquad\cdots,\left.e^{j\frac{2\pi d}{\lambda}\left(\left(H-1\right)\sin\left(\vartheta\right)\sin\left(\nu\right)+\left(V-1\right)\cos\left(\vartheta\right)\right)}\right]^{\operatorname{T}},
\end{align}
where $H$ and $V$ denote the number of TEs (antennas) in the horizontal and vertical directions of the active-ITS (BS/UEs), and the element spacing is set to be $d=\lambda/2$.
% and are independently distributed with $\mathcal CN\left(0, 10^{-0.1{\eta}}\right)$, the path loss $\eta$ is modeled as 
% \begin{align}
%     \eta=a+10b\log_{10}\left(d\right)+\zeta \, \text{dB}, 
% \end{align}
To unveil the theoretical upper bound of performance gain achieved by the active-ITS, we follow the assumption \cite{9491943,8930608,9090356,9279253,9680675} that the full instantaneous channel state information (CSI) knowledge can be perfectly estimated at the BS and all the calculations are performed at the BS. %\footnote{Different from the hybrid relay-RIS (HR-RIS) model \cite{9733238}, the proposed active-ITS only amplify the transmitted signal, and therefore does not have the ability for signal processing. Hence, the CSI still has to be estimated at the BS.}

% \begin{align}
%     \left|\sum_{n=1}^{N}{\left|f_n\right|\left|g_n\right|}\right|^2\le \sum_{n=1}^{N}{\left|f_n\right|^2}\sum_{n=1}^{N}{\left|g_n\right|^2}\to N^2\rho_f^2\rho_g^2
% \end{align}

\subsection{System Model}

Let $\mathbf A=\operatorname{diag}\left(a_1,a_2,\cdots,a_{N}\right)\in \mathbb R^{N\times N}$ and $\boldsymbol{\Theta}=\operatorname{diag}\left(e^{j\varphi_1},e^{j\varphi_2},\cdots,e^{j\varphi_{N}}\right)\in \mathbb C^{N\times N}$ represent the power amplification factor matrix and the transmissive phase-shifting matrix at the active-ITS with $a_n\ge1$ and $\varphi_n\in\left[0,2\pi\right)$, $\forall n\in N$. 
%Particularly, $a_n$ is no smaller than one and $\varphi_n\in\left[0,2\pi\right)$.
% and can be expressed as follows
% \begin{align}\label{eqdiscretephase}
%     \varphi _n\in\mathcal D\triangleq2\pi\left\{0,{1}/{2^b},{2}/{2^b},\cdots,({2^b-1})/{2^b}\right\},\forall n\in \mathcal N.
% \end{align}

The signal from the BS is given by $\mathbf x=\sum_{k=1}^{K}{\mathbf W_k\mathbf s_k},\forall k \in K$, where $\mathbf s_k \in \mathbb C^{s\times 1}$ represents $s$ desired streams for the $k$-th UE and satisfies $\mathbf s_k \sim \mathcal C\mathcal N\left(\boldsymbol{0}_{ s},\mathbf I_s\right)$, and $\mathbf W_k \in \mathbb C^{M_t\times s}$ denotes the linear precoding matrix.
% \begin{align}
%     \mathbf x=\sum_{k=1}^{K}{\mathbf W_k\mathbf s_k},\forall k \in K.
% \end{align}
As a result, the signal received at the active-ITS is given by
\begin{align}
    \mathbf y_{\operatorname{ITS}}^{r}=\mathbf G\mathbf x + \boldsymbol{\upsilon}, 
\end{align}
where $\boldsymbol{\upsilon}\in \mathbb C^{N\times 1}$ is additive white Gaussian noise (AWGN) and satisfying $\boldsymbol{\upsilon}\sim \mathcal C\mathcal N\left(\boldsymbol{0}_{N},\delta^2\mathbf I_{N}\right)$. 

Then, by employing an active-ITS, the refracted signal can be denoted by
\begin{align}
    \mathbf y_{\operatorname{ITS}}^t=\kappa\mathbf A\boldsymbol \Theta\left(\mathbf G\mathbf x+\boldsymbol \upsilon\right),%+\boldsymbol \varepsilon
\end{align}
where $\kappa$ characterizes the extent of loss corresponding to the absorption and the reflection of the signal power when the mmWave signal impinges upon and penetrates the ITS. The noise $\boldsymbol \upsilon$ after power amplifying cannot be negligible, which is significantly different from the conventional passive-ITS \cite{9723093}. 
For simplicity of presentation, let $\boldsymbol{\Phi}\triangleq  \mathbf A\circ\boldsymbol \Theta$ denotes the transmissive coefficient matrix, where each diagonal element is denoted by $\Phi_{n,n}=a_ne^{j\varphi_n},\forall n \in N$.

The signal received by the $k$-th UE can be mathematically expressed as follows 
\begin{align}
    \mathbf y_k=&\kappa\mathbf H_k^{\operatorname{H}}\mathbf\Phi\mathbf G\mathbf W_k\mathbf s_k  + \sum\nolimits_{i=1,i\ne k}^{K}{\kappa\mathbf H_k^{\operatorname{H}}\mathbf\Phi\mathbf G\mathbf W_i\mathbf s_i}+\kappa\mathbf H_k^{\operatorname{H}}\mathbf\Phi\boldsymbol \upsilon+\mathbf n_k,\forall k \in K,
\end{align}
where $\mathbf n_k\in \mathbb C^{M_r\times 1}$ represents the thermal noise at the $k$-th UE and satisfying $\mathbf n_k\sim \mathcal C\left(\boldsymbol{0}_{M_r},\sigma_k^2\mathbf I_{M_r}\right)$. 

Therefore, the signal-to-interference-plus-noise-ratio (SINR) at the $k$-th UE is given by
\begin{align}
    \boldsymbol \Gamma_k=\kappa^2\mathbf H_k^{\operatorname{H}}\mathbf\Phi\mathbf G\mathbf W_k\mathbf W_k^{\operatorname{H}}\mathbf G^{\operatorname{H}}\mathbf\Phi^{\operatorname{H}}\mathbf H_k\mathbf V_k^{\operatorname{-1}},\forall k \in K,
\end{align}
where $\mathbf V_k$ denotes the interference-plus-noise covariance matrix and can be expressed by
\begin{align}
    \mathbf V_k&=\sum\nolimits_{i=1,i\ne k}^{K}{\kappa^2\mathbf H_k^{\operatorname{H}}\mathbf\Phi\mathbf G\mathbf W_i\mathbf W_i^{\operatorname{H}}\mathbf G^{\operatorname{H}}\mathbf\Phi^{\operatorname{H}}\mathbf H_k}+\kappa^2\delta^2\mathbf H_k^{\operatorname{H}}\mathbf\Phi\mathbf\Phi^{\operatorname{H}}\mathbf H_k+\sigma_k^2\mathbf I_{M_r},\forall k \in K.
\end{align}
Accordingly, the achievable data rate of the $k$-th UE is given by 
\begin{align}
   \mathcal R _k=\log\left|\mathbf I+\boldsymbol{\Gamma_k}\right|,\forall k \in K.
\end{align}

\subsection{Problem Formulation}

In this paper, we aim to maximize the WSR of the outdoor-to-indoor mmWave communication system by joint optimizing the linear precoding matrix at the BS, i.e., $\mathbf W$, and the transmissive coefficient matrix at the active-ITS, i.e., $\boldsymbol \Phi$, while satisfying the constraints of the maximum power budgets at the BS and active-ITS, and the transmissive phase-shifting at each TE.  
% Moreover, for simplicity of presentation, we collect all the $\mathbf W,\forall k\in K$ $\mathbf W=\left\{\mathbf W_k,\forall k\in K\right\}$
Based on the above discussions, the WSR maximization problem can be formulated as follows
\begin{subequations}
\begin{align}\label{p1}
   \mathcal P_{1}: \mathop {\max}_{\mathbf W,\mathbf\Phi}  \quad &\mathcal R\left(\mathbf W,\mathbf\Phi\right)={\sum\nolimits_{k=1}^{K}{\alpha_k \log\left|\mathbf I+\boldsymbol{\Gamma_k}\right|}},\\
    \operatorname{s.t.}\quad&\sum\nolimits_{k=1}^{K}{\Vert\mathbf W_k\Vert_{\operatorname{F}}^2}\le P_{\operatorname{BS}}^{\max},\label{p11}\\
    &\sum\nolimits_{k=1}^{K}{\kappa^2\Vert\mathbf \Phi\mathbf G\mathbf W_k\Vert_{\operatorname{F}}^2}+\delta^2\kappa^2\Vert\mathbf \Phi\Vert_{\operatorname{F}}^2\le P_{\operatorname{ITS}}^{\max},\label{p12}\\
    &\angle \boldsymbol\Phi_{n,n} \in\left[0,2\pi\right),\forall n\in N,\label{p13}
\end{align}
\end{subequations}
where $\boldsymbol \alpha=\left[\alpha_1,\alpha_2,\cdots,\alpha_{K}\right]$ denotes the weighting factor vector with each element $\alpha_ k$ is a predetermined value depending on the fairness and the required quality of service for applications, and where $P_{\operatorname{BS}}^{\max}$ and $P_{\operatorname{ITS}}^{\max}$ are the budgets of the transmission power and the amplification power at the BS and the active-ITS, respectively. 

\textbf{\textit{Remark 1:}} If we remove the amplifying power budget constraint in \eqref{p12} and set $\mathbf A=\mathbf I$, Problem $\mathcal P_{1}$ reduce to the WSR maximization problem for conventional passive-RIS assisted communication systems \cite{9090356,9279253,9680675}. However, the additional constraint and the optimization of the power amplification factor matrix \cite{9374975} make Problem $\mathcal P_{1}$ even more challenging to solve.

\section{The Proposed Suboptimal Joint Precoding Algorithm}\label{sec3}
The formulated problem $\mathcal P_{1}$ is non-convex and arduous to tackle optimally, in this section, we propose an efficient joint precoding algorithm to handle it with suboptimal solutions. Particularly, to facilitate the solution development, we first reformulate the original problem as an equivalent but tractable form, and then a BCD-based method is provided to solve the transformed problem. 

\subsection{Problem Reformulation and Decomposition}

 First, to deal with the complexity of the objective function in the formulated WSR maximization problem $\mathcal P_1$, we transform it into an equivalent weighted minimum mean-square error (WMMSE) minimization problem \cite{5756489,9090356,9279253}. 
 
Let us assume that the MMSE estimator
$\mathbf U=\left\{\mathbf U_k\in \mathbb C^{M_r\times s},\forall k \in K\right\}$ are applied to UEs, and accordingly, the estimated signal vector of the $k$-th UE is given by
\begin{align}
    \mathbf {\hat s}_k=\mathbf U_k^{\operatorname{H}}\mathbf y_k, \forall k \in K.
\end{align}
Therefore, the MMSE matrix of the $k$-th UE is given by
\begin{align}
    \mathbf E_k=&\mathbb E_{\mathbf x,\boldsymbol \upsilon,\mathbf n}\left[\left(\mathbf {\hat s}_k-\mathbf s_k\right)\left(\mathbf {\hat s}_k-\mathbf s_k\right)^{\operatorname{H}}\right]\notag\\
    =&\sum\nolimits_{i=1}^{K}{\kappa^2\mathbf U_k^{\operatorname{H}}\mathbf H_k^{\operatorname{H}}\mathbf \Phi\mathbf G\mathbf W_i\mathbf W_i^{\operatorname{H}}\mathbf G^{\operatorname{H}}\mathbf \Phi^{\operatorname{H}}\mathbf H_k\mathbf U_k}+\kappa^2\delta^2\mathbf U_k^{\operatorname{H}}\mathbf H_k^{\operatorname{H}}\mathbf\Phi\mathbf\Phi^{\operatorname{H}}\mathbf H_k\mathbf U_k\notag\\
    &+\sigma_k^2\mathbf U_k^{\operatorname{H}}\mathbf U_k+\mathbf I_s-\kappa\mathbf U_k^{\operatorname{H}}\mathbf H_k^{\operatorname{H}}\mathbf \Phi\mathbf G\mathbf W_k-\kappa\mathbf W_k^{\operatorname{H}}\mathbf G^{\operatorname{H}}\mathbf \Phi^{\operatorname{H}}\mathbf H_k\mathbf U_k, \forall k \in K.
\end{align}
By introducing an auxiliary variable $\mathbf F_k \in \mathbb C^{s\times s}$ for the $k$-th UE and defining $\mathbf F=\left\{\mathbf F_k,\forall k \in K\right\}$, problem $\mathcal P_1$ can be equivalently transformed into
\begin{subequations}\label{p2s}
\begin{align}
   \mathcal P_2: \mathop {\max}_{\mathbf F,\mathbf U,\mathbf W,\mathbf\Phi}  \; &{\sum\nolimits_{k=1}^{K}{\alpha_k h_k\left(\mathbf F,\mathbf U,\mathbf W,\mathbf\Phi\right)}}\label{p2o}\\
    \operatorname{s.t.}\;&\eqref{p11}-\eqref{p13},
\end{align}
\end{subequations}
where $h_k\left(\mathbf F,\mathbf U,\mathbf W,\mathbf\Phi\right)=\log\left|\mathbf F_k\right|-\operatorname{Tr}\left(\mathbf F_k\mathbf E_k\right)+s,\forall k \in K$.
% \begin{align}
%     h_k\left(\mathbf F,\mathbf U,\mathbf W,\mathbf\Phi\right)=\log\left|\mathbf F_k\right|-\operatorname{Tr}\left(\mathbf F_k\mathbf E_k\right)+d,\forall k \in K.
% \end{align}

Although the above problem has been significantly simplified compared with problem $\mathcal P_{1}$, it is still challenging to tackle. Fortunately, based on the fact that the objective function in \eqref{p2o} is concave with respect to any one of the four variables (i.e., $\mathbf U$, $\mathbf F$, $\mathbf W$, and $\boldsymbol \Phi$) when the other three variables being fixed, which makes the problem more amenable.
In the following, we provide an efficient BCD-based algorithm for solving problem $\mathcal P_2$ in an iterative manner. 
%The solutions after each iteration are denoted by $\left( \bullet \right)^{\star}$.

Note that the MMSE decoding matrix $\mathbf U$ and the auxiliary matrix $\mathbf F$ do not exist in any constraints in problem $\mathcal P_2$, which implies that with fixed $\mathbf W$ and $\boldsymbol\Theta$, the calculation of $\mathbf F^{\star}$ and $\mathbf U^{\star}$ constitute a pair of unconstrained optimization
problems. Therefore, the optimal solutions can be obtained by setting the first-order partial derivatives of the objective function of problem $\mathcal P_2$ with respect to $\mathbf U_k,\forall k \in K$ and $\mathbf F_k,\forall k \in K$ to be zeros, respectively. After some matrix manipulations, the optimal closed-form solutions can be respectively determined by
\begin{align}\label{uk}
    \mathbf U_k^{\star}=\left(\mathbf {\bar V}_k\right)^{\operatorname{-1}}\mathbf H_k^{\operatorname{H}}\mathbf \Phi\mathbf G\mathbf W_k,\forall k \in K,
\end{align}
where $\mathbf {\bar V}_k=\mathbf { V}_k+{\kappa^2\mathbf H_k^{\operatorname{H}}\mathbf\Phi\mathbf G\mathbf W_k\mathbf W_k^{\operatorname{H}}\mathbf G^{\operatorname{H}}\mathbf\Phi^{\operatorname{H}}\mathbf H_k}$, and
\begin{align}\label{fk}
    \mathbf F_k^{\star}=\left( \mathbf E_k^{\star}\right)^{\operatorname{-1}},\forall k\in K,
\end{align}
where $\mathbf E_k^{\star}$ is determined by 
\begin{align}
    \mathbf E_k^{\star}=\mathbf I_s-\mathbf W_k^{\operatorname{H}}\mathbf G^{\operatorname{H}}\boldsymbol\Phi^{\operatorname{H}}\mathbf H_k\mathbf {\bar V}_k^{\operatorname{-1}}\mathbf H_k^{\operatorname{H}}\boldsymbol\Phi\mathbf G\mathbf W_k, \forall k \in K.
\end{align}

Once both $\mathbf U^{\star}$ and $\mathbf F^{\star}$ are determined, the remaining work is to optimize the linear precoding matrix $\mathbf W$ and the transmissive coefficient matrix $\mathbf \Phi$. Note that $\mathbf W$ and $\mathbf \Phi$ are intricately coupled in $\mathbf E_k, \forall k\in K$, we consider decomposing problem $\mathcal P_2$ into two subproblem with respect to these two variables. In particular, the two subproblems can be expressed by
\begin{subequations}
\begin{align}
    \mathcal P_{3}: \mathbf W^{\star}=\arg\mathop {\min}_{\mathbf W} &{\sum_{k=1}^{K}{\alpha_k\operatorname{Tr}\left(\mathbf F_k\mathbf E_k \right)}},\operatorname{s.t.}\;\eqref{p11}\;\text{and}\;\eqref{p12},\label{p2S}\\
    \mathcal P_{4}: \mathbf \Phi^{\star}=\arg\mathop {\min}_{\mathbf \Phi} &{\sum_{k=1}^{K}{\alpha_k\operatorname{Tr}\left(\mathbf F_k\mathbf E_k \right)}},\operatorname{s.t.}\;\eqref{p12}\;\text{and}\;\eqref{p13}.\label{p3S}
\end{align}
\end{subequations}

In the rest of this section, we propose a pair of methods to solve the above two subproblems.

\subsection{Linear Precoding Matrix  Optimization}
In this subsection, we focus on the subproblem state in \eqref{p2S} for optimizing linear precoding matrix $\mathbf W^{\star}$ at the BS with the fixed transmissive coefficient matrix $\boldsymbol \Phi$. Since the part terms of the objective function in problem \eqref{p2S}, i.e.,  $\operatorname{Tr}\left({\alpha_k\kappa^2\delta^2\mathbf U_k^{\operatorname{H}}\mathbf H_k^{\operatorname{H}}\mathbf\Phi\mathbf\Phi^{\operatorname{H}}\mathbf H_k\mathbf U_k}\right)$, and $\operatorname{Tr}\left({\sigma_k^2\mathbf U_k^{\operatorname{H}}\mathbf U_k}\right)+\mathbf I_s$ are irrelevant constants with respect to $\mathbf W$,
and hence, by omitting them, the optimal solution of $\mathbf W$ can be determined by solving the following subproblem
\begin{subequations}\label{p21}
\begin{align}
  \mathcal P_{3\mbox{-}1}:  \mathop {\min}_{\mathbf W}  \quad &    \sum_{k=1}^{K}{\operatorname{Tr}\left({\mathbf W_k^{\operatorname{H}}\mathbf Q\mathbf W_k}\right)}
    -\sum_{k=1}^{K}{\operatorname{Tr}\left({\alpha_k\kappa\mathbf F_k\mathbf U_k^{\operatorname{H}}\mathbf H_k^{\operatorname{H}}\mathbf \Phi\mathbf G\mathbf W_k}\right)}\notag\\
    &
    -\sum_{k=1}^{K}{\operatorname{Tr}\left({\alpha_k\kappa\mathbf F_k\mathbf W_k^{\operatorname{H}}\mathbf G^{\operatorname{H}}\mathbf \Phi^{\operatorname{H}}\mathbf H_k\mathbf U_k}\right)},\label{p21o}\\
    \operatorname{s.t.}\quad&\sum_{k=1}^{K}{\operatorname{Tr}\left({\mathbf W_k^{\operatorname{H}}\mathbf W_k}\right)}\le P_{\operatorname{BS}}^{\max},\label{p211}\\
    &\sum_{k=1}^{K}{\operatorname{Tr}\left({\kappa^2\mathbf W_k^{\operatorname{H}}\mathbf G^{\operatorname{H}}\mathbf \Phi^{\operatorname{H}}\mathbf \Phi\mathbf G\mathbf W_k}\right)}\le \hat P_{\operatorname{ITS}}^{\max},\label{p212}
\end{align}
\end{subequations}
where $\hat P_{\operatorname{ITS}}^{\max}\triangleq P_{\operatorname{ITS}}^{\max}-\operatorname{Tr}\left(\delta^2\kappa^2\boldsymbol{\Phi}^{\operatorname{H}}\boldsymbol{\Phi}\right)$, 
and $\mathbf Q=\sum_{k=1}^{K}{\alpha_k\kappa^2\mathbf G^{\operatorname{H}}\boldsymbol{\Phi}^{\operatorname{H}}\mathbf H_k\mathbf U_k\mathbf F_k\mathbf U_k^{\operatorname{H}}\mathbf H_k^{\operatorname{H}}\boldsymbol{\Phi}}\mathbf G$.

It can be proved that $\mathbf Q$ is a positive semi-definite matrix, therefore the objective function in \eqref{p21o} is a convex function. In addition, the constraints in \eqref{p211} and \eqref{p212} are convex with respect to $\mathbf W$, and so problem $\mathcal P_{3\mbox{-}1}$ constitutes a convex optimization problem, which can be effectively solved by employing convex solver toolbox, e.g., CVX \cite{cvx}. Instead of relying on the generic solver with high computational complexity, we provide the following duality method to further proceed with problem $\mathcal P_{3\mbox{-}1}$ efficiently.  

By introducing a pair of auxiliary variables $\lambda \ge 0$ and $\mu \ge 0$, the constraints in \eqref{p211} and \eqref{p212} can be combined as
\begin{align}
   &\lambda\sum_{k=1}^{K}{\operatorname{Tr}\left({\mathbf W_k^{\operatorname{H}}\mathbf W_k}\right)}+\mu\sum_{k=1}^{K}{\operatorname{Tr}\left({\kappa^2\mathbf W_k^{\operatorname{H}}\mathbf G^{\operatorname{H}}\mathbf \Phi^{\operatorname{H}}\mathbf \Phi\mathbf G\mathbf W_k}\right)}\le P_{\operatorname{sum}}^{\max},\label{p22wa}
\end{align}
where the constant $P_{\operatorname{sum}}^{\max}$ is defined by $P_{\operatorname{sum}}^{\max}\triangleq \lambda P_{\operatorname{BS}}^{\max}+\mu \hat P_{\operatorname{ITS}}^{\max}$. 
Consequently, problem $\mathcal P_{3\mbox{-}1}$ can be recast by 
\begin{align}\label{p22}
  \mathcal P_{3\text{-}2}:  g\left(\lambda,\mu\right)\triangleq\left\{\mathop {\min}_{\mathbf W}\,\eqref{p21o},\; \operatorname{s.t.}\, \eqref{p22wa}\right\}.
\end{align}

It is evident that a feasible solution of problem $\mathcal P_{3\mbox{-}1}$ is also feasible for problem $\mathcal P_{3\mbox{-}2}$. The relation of the optimal values between the two problems is shown in the following result.

\begin{lemma}
The optimal value of problem $\mathcal P_{3\text{-}2}$ for any given pair of $\left\{\lambda,\mu\right\}$, $\lambda\ge0$, $\mu\ge0$, is an lower bound on the optimal value of problem $\mathcal P_{3\text{-}1}$. Furthermore,  the optimal value of the dual problem, i.e., $\mathop {\max}_{\lambda, \mu}\,g\left(\lambda,\mu\right)$, is equal to that of problem $\mathcal P_{3\text{-}1}$.
\end{lemma}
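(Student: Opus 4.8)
The plan is to prove the two assertions separately: the lower-bound claim by a feasible-set inclusion, and the equality by Lagrangian duality for the convex program $\mathcal P_{3\text{-}1}$. Write $f(\mathbf W)$ for the objective \eqref{p21o}, $p^\star$ for the optimal value of $\mathcal P_{3\text{-}1}$, and set $t_1(\mathbf W)=\sum_{k=1}^{K}\operatorname{Tr}(\mathbf W_k^{\operatorname{H}}\mathbf W_k)$ and $t_2(\mathbf W)=\sum_{k=1}^{K}\kappa^2\operatorname{Tr}(\mathbf W_k^{\operatorname{H}}\mathbf G^{\operatorname{H}}\boldsymbol\Phi^{\operatorname{H}}\boldsymbol\Phi\mathbf G\mathbf W_k)$, so that \eqref{p211}--\eqref{p212} read $t_1(\mathbf W)\le P_{\operatorname{BS}}^{\max}$, $t_2(\mathbf W)\le\hat P_{\operatorname{ITS}}^{\max}$ and \eqref{p22wa} reads $\lambda t_1(\mathbf W)+\mu t_2(\mathbf W)\le P_{\operatorname{sum}}^{\max}$. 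For the lower bound I would observe that any $\mathbf W$ feasible for $\mathcal P_{3\text{-}1}$ satisfies both constraints, so multiplying the first by $\lambda\ge0$ and the second by $\mu\ge0$ and adding gives \eqref{p22wa}; hence the feasible set of $\mathcal P_{3\text{-}1}$ is contained in that of $\mathcal P_{3\text{-}2}$, and since both problems minimize the same $f$, we get $g(\lambda,\mu)\le p^\star$ for every $\lambda,\mu\ge0$, and in particular $\max_{\lambda,\mu\ge0}g(\lambda,\mu)\le p^\star$.

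For the reverse inequality I would invoke strong duality for $\mathcal P_{3\text{-}1}$. That problem is convex (the objective is convex since $\mathbf Q\succeq0$, and $t_1,t_2$ are convex quadratics, as already noted), its feasible set is nonempty and compact so $p^\star$ is finite and attained, and Slater's condition holds because $\mathbf W=0$ satisfies \eqref{p211}--\eqref{p212} strictly whenever $P_{\operatorname{BS}}^{\max},\hat P_{\operatorname{ITS}}^{\max}>0$. Hence there exist optimal multipliers $\lambda^\star,\mu^\star\ge0$ with $\inf_{\mathbf W}L(\mathbf W,\lambda^\star,\mu^\star)=p^\star$, where $L(\mathbf W,\lambda,\mu)=f(\mathbf W)+\lambda\left(t_1(\mathbf W)-P_{\operatorname{BS}}^{\max}\right)+\mu\left(t_2(\mathbf W)-\hat P_{\operatorname{ITS}}^{\max}\right)$. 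Now take $(\lambda,\mu)=(\lambda^\star,\mu^\star)$ in $\mathcal P_{3\text{-}2}$: for any $\mathbf W$ satisfying \eqref{p22wa} one has $\lambda^\star(t_1(\mathbf W)-P_{\operatorname{BS}}^{\max})+\mu^\star(t_2(\mathbf W)-\hat P_{\operatorname{ITS}}^{\max})\le0$, so $f(\mathbf W)\ge L(\mathbf W,\lambda^\star,\mu^\star)\ge p^\star$; taking the infimum over all such $\mathbf W$ gives $g(\lambda^\star,\mu^\star)\ge p^\star$. Combined with the lower bound this forces $g(\lambda^\star,\mu^\star)=p^\star$, whence $\max_{\lambda,\mu\ge0}g(\lambda,\mu)=p^\star$, the optimal value of $\mathcal P_{3\text{-}1}$, which is the claim.

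The feasible-set inclusion is routine; the substantive step is the strong-duality part, and within it the point needing care is verifying the hypotheses of the strong-duality theorem for $\mathcal P_{3\text{-}1}$ — convexity (already in hand) together with Slater's condition and a finite, attained optimum — and checking the degenerate case in which an optimal multiplier vanishes because the corresponding power budget is inactive, in which case that term simply drops out of \eqref{p22wa} and the argument is unchanged. I would deliberately route the argument through the Lagrangian of the two-constraint problem $\mathcal P_{3\text{-}1}$, rather than dualizing the single constraint \eqref{p22wa} of $\mathcal P_{3\text{-}2}$ directly, so as to avoid having to invoke a separate constraint qualification for the $(\lambda,\mu)$-parametrized problem $\mathcal P_{3\text{-}2}$.
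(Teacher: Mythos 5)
Your proof is correct, and it is essentially the standard argument that the paper itself omits by deferring to Propositions 4 and 5 of \cite{6169204}: the lower bound follows from the feasible-set inclusion induced by aggregating the two constraints with nonnegative weights, and the matching upper bound follows from strong duality for the convex problem $\mathcal P_{3\text{-}1}$ (Slater holds at $\mathbf W=\mathbf 0$, and the feasible set is compact so the optimum is attained), evaluated at the optimal multipliers $\left(\lambda^{\star},\mu^{\star}\right)$. Your handling of the degenerate case of a vanishing multiplier, and your choice to dualize $\mathcal P_{3\text{-}1}$ rather than the single-constraint problem $\mathcal P_{3\text{-}2}$, are both sound and require no further justification.
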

\begin{proof}
This lemma follows \textit{Propositions} 4 and 5 in \cite{6169204}, and hence, is omitted here for brevity.
\end{proof}

With the fixed variables $\left\{\lambda,\mu\right\}$, by introducing a Lagrangian multiplier $\varepsilon\ge0$ associated with the combined constraint in \eqref{p22wa}, the corresponding Lagrangian function can be represented by
\begin{align}
   \mathcal L\left(\mathbf W,\varepsilon\right)=&\sum_{k=1}^{K}{\operatorname{Tr}\left({\mathbf W_k^{\operatorname{H}}\mathbf {\hat Q}\left(\varepsilon\right)\mathbf W_k}\right)}
    -\sum_{k=1}^{K}{\operatorname{Tr}\left({\alpha_k\kappa\mathbf F_k\mathbf U_k^{\operatorname{H}}\mathbf H_k^{\operatorname{H}}\mathbf \Phi\mathbf G\mathbf W_k}\right)}
    \notag\\&-\sum_{k=1}^{K}{\operatorname{Tr}\left({\alpha_k\kappa\mathbf F_k\mathbf W_k^{\operatorname{H}}\mathbf G^{\operatorname{H}}\mathbf \Phi^{\operatorname{H}}\mathbf H_k\mathbf U_k}\right)}-\varepsilon P_{\operatorname{sum}}^{\max},
\end{align}
where $\mathbf {\hat Q}\left(\varepsilon\right)\triangleq\mathbf Q+\varepsilon\left(\lambda\mathbf I_{M_t}+\mu\kappa^2\mathbf G^{\operatorname{H}}\mathbf \Phi^{\operatorname{H}}\mathbf \Phi\mathbf G\right)$.

Accordingly, the Karush–Kuhn–Tucker (KKT) conditions of problem $\mathcal P_{3\text{-}2}$ are given by
\begin{subequations}
\begin{align}
    \nabla_{\mathbf W_{k}}{\mathcal L\left(\mathbf {\tilde W},\varepsilon\right)}= 2\frac{\partial \mathcal L\left(\mathbf {\tilde W},\varepsilon\right)}{\partial {\mathbf W_{k}^{\ast}}}&=\boldsymbol{0}, \forall k,\label{lar1}\\
    \varepsilon f\left(\mathbf W_k\left(\varepsilon\right)\right)&=0, \forall k,\label{lar2}
\end{align}
\end{subequations}
where $f\left(\mathbf W_k\left(\varepsilon\right)\right)=\lambda\sum_{k=1}^{K}{\operatorname{Tr}\left({\mathbf W_k^{\operatorname{H}}\mathbf W_k}\right)}
   +\mu\sum_{k=1}^{K}{\operatorname{Tr}\left({\kappa^2\mathbf W_k^{\operatorname{H}}\mathbf G^{\operatorname{H}}\mathbf \Phi^{\operatorname{H}}\mathbf \Phi\mathbf G\mathbf W_k}\right)}- P_{\operatorname{sum}}^{\max}$.

From the first KKT condition in \eqref{lar1}, with a fixed $\varepsilon$, the optimal solution of $\mathbf W_{k},\forall k$ is calculated by   
\begin{align}
    \mathbf W_k\left(\varepsilon\right)=\alpha_k\kappa\mathbf {\hat Q}\left(\varepsilon\right)^{\operatorname{-1}} \mathbf G^{\operatorname{H}}\mathbf \Phi^{\operatorname{H}}\mathbf H_k\mathbf U_k\mathbf F_k, \forall k \in K.
\end{align}
Then, the dual variable $\varepsilon$ should be determined for satisfying the second KKT condition in \eqref{lar2}. We search $\varepsilon$ for two cases, i.e., $\varepsilon=0$ and $\varepsilon> 0$. For the former case, if the inverse of $\mathbf { Q}$ exists and $f\left(\mathbf W_k\left(0\right)\right) \le 0$ holds, the optimal solution for the $k$-th UE can be obtained by 
\begin{align}
  \mathbf W_k=\mathbf W_k\left(0\right)=\alpha_k\kappa\mathbf { Q}^{\operatorname{-1}} \mathbf G^{\operatorname{H}}\mathbf \Phi^{\operatorname{H}}\mathbf H_k\mathbf U_k\mathbf F_k, \forall k \in K.
\end{align}
Otherwise, we have to find $\varepsilon$ for ensuring $f\left(\mathbf W_k\left(\varepsilon\right)\right) = 0$ satisfied. Note that $f\left(\mathbf W_k\left(\varepsilon\right)\right)$ is a monotonically decreasing function of $\varepsilon$ that enables a bisection search method to find $\varepsilon$ \cite{9090356}.
Then, the remaining work is to solve the following dual problem
\begin{align}
    \mathop {\max}_{\lambda, \mu}\,g\left(\lambda,\mu\right).
\end{align}

We employ the iterative subgradient method \cite{9279253} to find the solutions. First, the subgradient directions of the above problem are determined by 
\begin{subequations}
\begin{align}
    f_{P_{\text{BS}}}=\sum_{k=1}^{K}{\operatorname{Tr}\left({\mathbf W_k^{\operatorname{H}}\mathbf W_k}\right)}-P_{\text{BS}}^{\max},\\
    f_{P_{\text{ITS}}}=\sum_{k=1}^{K}{\operatorname{Tr}\left({\kappa^2\mathbf W_k^{\operatorname{H}}\mathbf G^{\operatorname{H}}\mathbf \Phi^{\operatorname{H}}\mathbf \Phi\mathbf G\mathbf W_k}\right)}-\hat P_{\text{ITS}}^{\max}.
\end{align}
\end{subequations}
    
Then, the values of $\lambda$ and $\mu$ can be updated in an iterative manner, the details are summarized in Algorithm \ref{a1}, where $\xi $ denotes the step size of the subgradient algorithm, $\epsilon$ represents the tolerance, and the superscript stands for the number of iteration. The overall dual-based method for optimizing the linear precoding matrix at the BS is summarized in Algorithm \ref{a2}, and the optimal solution $\mathbf W^{\star}$ can be obtained when the algorithm converged.
\begin{algorithm}[t]
\caption{Iterative Subgradient algorithm}
\label{a1} 
\begin{algorithmic}[1]
\STATE Initialize: $\lambda^{\operatorname{0}}$, $\mu^{\operatorname{0}}$, and iteration index $p=0$. 
\REPEAT 
\STATE 
$\lambda^{{p+1}}$$\leftarrow\lambda^{{p}}+\xi f_{P_{\text{BS}}}$,\\
\STATE 
$\mu^{{p+1}}$  $\leftarrow\mu^{{p}}+\xi f_{P_{\text{ITS}}}$.
\UNTIL $\left|\lambda^{{p+1}}-\lambda^{{p}}\right|\le \epsilon$ and $\left|\mu^{{p+1}}-\mu^{{p}}\right|\le \epsilon$.
\end{algorithmic} 
\end{algorithm}

\begin{algorithm}[t]
\caption{Dual-based Algorithm for Calculating $\mathbf W$}
\label{a2} 
\begin{algorithmic}[1]
\STATE Initialize: $\lambda^{\operatorname{0}}$, $\mu^{\operatorname{0}}$, $\mathbf W^{\operatorname{0}}$, $\varepsilon^{\operatorname{0}}$, and iteration index $r=0$.
\STATE Calculate the value of the function in \eqref{p21o} as $v\left(\mathbf W^{\operatorname{0}}\right)$. 
\REPEAT 
\STATE Calculate $\mathbf W^{\operatorname{r+1}}$ and search $\varepsilon^{\operatorname{r+1}}$,\\
 Update $\lambda^{\operatorname{r+1}}$ and $\mu^{\operatorname{r+1}}$ by employing Algorithm \ref{a1},
\UNTIL $|v\left(\mathbf W^{\operatorname{r+1}}\right)-v\left(\mathbf W^{\operatorname{r}}\right)|/v\left(\mathbf W^{\operatorname{r}}\right)\le\varepsilon$.
\STATE\textbf{Output:} $\mathbf W^{\star}=\mathbf W^{\operatorname{r}}$.
\end{algorithmic} 
\end{algorithm}

\subsection{Transmissive Coefficient Matrix  Optimization}
Now, we consider to solve the subproblem of optimizing the transmissive coefficient matrix at the active-ITS. With the fixed $\mathbf W^{\star}$ by employing Algorithm \ref{a2} and define $\mathbf {\tilde W}=\sum_{i=1}^{K}{\mathbf W_i\mathbf W_i^{\operatorname{H}}}$, after dropping the constant terms, problem $\mathcal P_4$ can be simplified expressed by
\begin{align}
 \mathcal P_{4}':   \mathop {\min}_{\mathbf \Phi}  \; &\sum_{k=1}^{K}{\alpha_k\kappa^2\operatorname{Tr}\left(\mathbf F_k\mathbf U_k^{\operatorname{H}}\mathbf H_k^{\operatorname{H}}\mathbf \Phi\mathbf G\mathbf {\tilde W}\mathbf G^{\operatorname{H}}\mathbf \Phi^{\operatorname{H}}\mathbf H_k\mathbf U_k\right)}\notag\\
 &+\sum_{k=1}^{K}{\alpha_k\kappa^2\delta^2\operatorname{Tr}\left(\mathbf F_k\mathbf U_k^{\operatorname{H}}\mathbf H_k^{\operatorname{H}}\mathbf \Phi\mathbf \Phi^{\operatorname{H}}\mathbf H_k\mathbf U_k\right)}\notag\\
    &-\sum_{k=1}^{K}{\alpha_k\kappa\operatorname{Tr}\left(\mathbf F_k\mathbf U_k^{\operatorname{H}}\mathbf H_k^{\operatorname{H}}\mathbf \Phi\mathbf G\mathbf W_k\right)}\notag\\
 &-\sum_{k=1}^{K}{\alpha_k\kappa\operatorname{Tr}\left(\mathbf F_k\mathbf W_k^{\operatorname{H}}\mathbf G\mathbf \Phi^{\operatorname{H}}\mathbf H_k\mathbf U_k\right)}\\
    \operatorname{s.t.}\;&\eqref{p12} \;\operatorname{and}\; \eqref{p13}.\notag
\end{align}

Note that the above problem is still difficult to tackle. In the following, we transform problem $\mathcal P_{4}'$ into an equivalent form by employing some further algebraic manipulations. 
First, by defining $\mathbf D_k=\alpha_k\mathbf H_k\mathbf U_k\mathbf F_k\mathbf U_k^{\operatorname{H}}\mathbf H_k^{\operatorname{H}}$, $\mathbf D=\sum_{k=1}^{K}{\mathbf D_k}$, and $\mathbf T = \kappa^2\mathbf G\mathbf {\tilde W}\mathbf G^{\operatorname{H}}$, we have
\begin{subequations}
\begin{align}
    \sum_{k=1}^{K}{\alpha_k\kappa^2\operatorname{Tr}\left(\mathbf F_k\mathbf U_k^{\operatorname{H}}\mathbf H_k^{\operatorname{H}}\mathbf \Phi\mathbf G\mathbf {\tilde W}\mathbf G^{\operatorname{H}}\mathbf \Phi^{\operatorname{H}}\mathbf H_k\mathbf U_k\right)}&=\operatorname{Tr}\left(\mathbf \Phi^{\operatorname{H}}\mathbf D\mathbf \Phi\mathbf T\right),\\
    \sum_{k=1}^{K}{\alpha_k\operatorname{Tr}\left(\mathbf F_k\mathbf U_k^{\operatorname{H}}\mathbf H_k^{\operatorname{H}}\mathbf \Phi\mathbf \Phi^{\operatorname{H}}\mathbf H_k\mathbf U_k\right)}&=\operatorname{Tr}\left(\mathbf \Phi^{\operatorname{H}}\mathbf D\mathbf \Phi\right),\\
    \sum_{k=1}^{K}{\kappa^2\operatorname{Tr}\left({\mathbf W_k^{\operatorname{H}}\mathbf G^{\operatorname{H}}\mathbf \Phi^{\operatorname{H}}\mathbf \Phi\mathbf G\mathbf W_k}\right)}&=\operatorname{Tr}\left(\mathbf \Phi^{\operatorname{H}}\mathbf \Phi\mathbf T\right).
\end{align}
\end{subequations}
Then, by defining $\mathbf C_k=\alpha_k\kappa\mathbf H_k\mathbf U_k\mathbf F_k\mathbf W_k^{\operatorname{H}}\mathbf G$ and $\mathbf C=\sum_{k=1}^{K}{\mathbf C_k}$, we have 
\begin{subequations}
\begin{align}
    \sum_{k=1}^{K}{\alpha_k\kappa\operatorname{Tr}\left(\mathbf F_k\mathbf U_k^{\operatorname{H}}\mathbf H_k^{\operatorname{H}}\mathbf \Phi\mathbf G\mathbf W_k\right)}&=\operatorname{Tr}\left(\mathbf C^{\operatorname{H}}\mathbf \Phi\right),\\
    \sum_{k=1}^{K}{\alpha_k\kappa\operatorname{Tr}\left(\mathbf F_k\mathbf W_k^{\operatorname{H}}\mathbf G\mathbf \Phi^{\operatorname{H}}\mathbf H_k\mathbf U_k\right)}&=\operatorname{Tr}\left(\mathbf \Phi^{\operatorname{H}}\mathbf C\right).
\end{align}
\end{subequations}
%{!!!!!F_k is real from E.}
Further, we define a sequence of equalities as follows
\begin{subequations}
\begin{align}
    \operatorname{Tr}\left(\mathbf \Phi^{\operatorname{H}}\mathbf D\mathbf \Phi\mathbf T\right)&\triangleq \boldsymbol{\phi}^{\operatorname{H}}\mathbf \Omega\boldsymbol{\phi},\\ \operatorname{Tr}\left(\mathbf \Phi^{\operatorname{H}}\mathbf C\right)&\triangleq\boldsymbol{\phi}^{\operatorname{H}}\mathbf c,\\
    \operatorname{Tr}\left(\mathbf \Phi\mathbf C^{\operatorname{H}}\right)&\triangleq\mathbf c^{\operatorname{H}}\boldsymbol{\phi},
\end{align}
\end{subequations}
where $\boldsymbol \Omega = \mathbf D\circ \mathbf T^{\operatorname{T}}$, $\boldsymbol\phi=\operatorname{Vecd}\left(\boldsymbol\Phi\right)$, and $\mathbf c=\operatorname{Vecd}\left(\mathbf  C\right)$, with $\operatorname{Vecd}\left(\mathbf X\right)$ forms a vector out of the diagonal of its matrix argument. The above equalities given above follow from the properties in \cite[Theorem 1.11]{2017Matrix}.

% Besides, according to the fact that $\operatorname{Tr}\left(\boldsymbol{\Phi}^{\operatorname{H}}\boldsymbol{\Phi}\right)=\mathbf a^{\operatorname{T}}\mathbf a$, we have 
% \begin{subequations}
% \begin{align}
%     \operatorname{Tr}\left(\mathbf \Phi^{\operatorname{H}}\mathbf D\mathbf \Phi\right)&=\sum_{n=1}^{\mathcal N}{ a_n^2 D_{\operatorname{n,n}}},\\
%     \operatorname{Tr}\left(\mathbf \Phi^{\operatorname{H}}\mathbf \Phi\mathbf T\right)&=\sum_{n=1}^{\mathcal N}{ a_n^2T_{\operatorname{n,n}}}.
% \end{align}
% \end{subequations}
% Therefore, the constraint on the amplification power budget in \eqref{p13} can be recast by
% \begin{align}\label{apbn}
%     \sum_{n=1}^{\mathcal N}{a_n^2 T_{\operatorname{n,n}}}+\delta^2\kappa^2\sum_{n=1}^{\mathcal N}{a_n^2}\le P_{\operatorname{ITS}}^{\max}.
% \end{align}

Here, our object is to obtain the transmissive coefficient vector of the active-ITS, $\boldsymbol \phi$, which consists of power
amplification factor and phase-shifting, i.e., $\mathbf a$ and $\boldsymbol \theta$, where 
$\mathbf a=\operatorname{Vecd}\left(\mathbf  A\right)$ and $\boldsymbol \theta=\operatorname{Vecd}\left(\boldsymbol \Theta\right)$, respectively. Consequently, we have $\boldsymbol \phi=\mathbf a\circ \boldsymbol \theta$.

According to the fact that $\operatorname{Tr}\left(\boldsymbol{\Phi}^{\operatorname{H}}\boldsymbol{\Phi}\right)=\mathbf a^{\operatorname{T}}\mathbf a$, we have 
\begin{subequations}
\begin{align}
    \operatorname{Tr}\left(\boldsymbol{\Phi}^{\operatorname{H}}\boldsymbol{\Phi}\right)&=\sum_{n=1}^{N}{a_n^2},\\
    \operatorname{Tr}\left(\mathbf \Phi^{\operatorname{H}}\mathbf D\mathbf \Phi\right)&=\sum_{n=1}^{N}{ a_n^2 D_{\operatorname{n,n}}},\\
    \operatorname{Tr}\left(\mathbf \Phi^{\operatorname{H}}\mathbf \Phi\mathbf T\right)&=\sum_{n=1}^{N}{ a_n^2T_{\operatorname{n,n}}}.
\end{align}
\end{subequations}
In addition, the constraint on the amplification power budget in \eqref{p13} can be recast by
\begin{align}\label{apbn}
    \sum_{n=1}^{N}{a_n^2 T_{\operatorname{n,n}}}+\delta^2\kappa^2\sum_{n=1}^{N}{a_n^2}\le P_{\operatorname{ITS}}^{\max}.
\end{align}

Based on the above discussion, problem $\mathcal P_4'$ can be equivalently transformed as follows
\begin{subequations}\label{p31p}
\begin{align}
   \mathcal P_{4\mbox{-}1}: \mathop{\min} _{\mathbf a, \boldsymbol \theta} \;&  f\left(\mathbf a,\boldsymbol\theta\right)=\left(\mathbf a^{\operatorname{T}}\circ\boldsymbol{\theta}^{\operatorname{H}}\right)\mathbf \Omega\left(\mathbf a  \circ  \boldsymbol{\theta}\right)+\delta^2\kappa^2\sum_{n=1}^{N}{a_n^2 D_{\operatorname{n,n}}}\notag\\
   &-\left(\mathbf a^{\operatorname{T}}\circ\boldsymbol{\theta}^{\operatorname{H}}\right)\mathbf c-\mathbf c^{\operatorname{H}}\left(\mathbf a  \circ  \boldsymbol{\theta}\right),\label{p31}\\
    \operatorname{s.t.}\;& \sum_{n=1}^{N}{ a_n^2\left( T_{\operatorname{n,n}}+\delta^2\kappa^2\right)}\le P_{\operatorname{ITS}}^{\max},\label{p311}\\
    &\angle \theta_n=\varphi_n\in\left[0,2\pi\right),\forall n \in N. \label{p312}
\end{align}
\end{subequations}

Although problem $\mathcal P_{4-1}$  has been significantly simplified compared with problem $\mathcal P_4'$, it is still non-convex. As a result, the dual-based algorithm cannot be employed here due to the dual gap is not zero. Therefore, we consider transform the above challenging problem into an equivalent form by employing a price mechanism \cite{9110849}:
\begin{subequations}
\begin{align}
   \mathcal P_{4\mbox{-}1}': \mathop{\min} _{\mathbf a, \boldsymbol \theta} \;& h\left(\mathbf a,\boldsymbol\theta\right)= f\left(\mathbf a,\boldsymbol\theta\right)+\eta g\left(\mathbf a\right)\\
    \operatorname{s.t.}\;& \eqref{p312}.
\end{align}
\end{subequations}
where $\eta \ge 0$ is the introduced price of the function $g\left(\mathbf a\right)=\sum_{n=1}^{N}{ a_n^2\left( T_{\operatorname{n,n}}+\delta^2\kappa^2\right)}$, which is the left hand of constraint in \eqref{p311}. The value of $\eta$ can be obtained by employing the bisection search method similar to search $\varepsilon$ \cite{9110849}.

Then, for a given $\eta$, we propose an efficient element-wise alternate sequential optimization (ASO) algorithm \cite{8930608, 9110912} to obtain the high-quality suboptimal solutions of the power amplification factor vector $\mathbf a$ and the transmissive phase-shifting vector $\boldsymbol\theta$ \cite{9374975}. %in a one-by-one manner when this of other $N-1$ TEs being fixed.
%i.e, the phase-shifting $\theta_n$ and amplifying power factor $a_n$

More specifically, we further express the terms of the objective function $h\left(\mathbf a,\boldsymbol\theta\right)$ as follows  
\begin{align}
   &\left(\mathbf a^{\operatorname{T}}\circ\boldsymbol{\theta}^{\operatorname{H}}\right)\mathbf \Omega\left(\mathbf a  \circ  \boldsymbol{\theta}\right) \notag\\
   =& \sum_{i=1,i\ne n}^{N}{\left(\mathbf a^{\operatorname{T}}  \circ  \boldsymbol{\theta}^{\operatorname{H}} \right)\boldsymbol\Omega_{:,i} a_i\theta_i}+\left(\mathbf a^{\operatorname{T}}  \circ  \boldsymbol{\theta}^{\operatorname{H}} \right)\boldsymbol\Omega_{:,n} a_n\theta_n\notag\\
   =&\sum_{i=1,i\ne n}^{N}{ a_n\theta_n^*\Omega_{n,i} a_i\theta_i}+\sum_{j=1,j\ne n}^{N}{\sum_{i=1,i\ne n}^{N}{ a_j\theta_j^*\Omega_{j,i} a_i\theta_i}}+ a_n\theta_n^*\Omega_{n,n} a_n\theta_n+\sum_{i=1,i\ne n}^{N}{ a_i\theta_i^*\Omega_{i,n} a_n\theta_n},
\end{align}
and
\begin{align}
    \left(\mathbf a^{\operatorname{T}}\circ\boldsymbol{\theta}^{\operatorname{H}}\right)\mathbf c =a_n\theta_n^*c_n +\sum_{i=1,i\ne n}^{N}{ a_i\theta_i^* c_i}.
\end{align}

It can be readily verified that $\mathbf \Omega = \mathbf D\circ \mathbf T^{\operatorname{T}}$ is a positive semi-definite matrix due to $\mathbf D$ and $\mathbf T^{\operatorname{T}}$ are semi-definite matrices, and hence, we have $\Omega_{n,i}=\Omega_{i,n}^*$. Similarly, the constraint on the amplification power budget in \eqref{apbn} can be expressed by $a_n^2\left( T_{\operatorname{n,n}}+\delta^2\kappa^2\right)\le {\bar P}_{\operatorname{ITS}}^{\max}$
% \begin{align}\label{apbnn}
%      a_n^2\left( T_{\operatorname{n,n}}+\delta^2\kappa^2\right)\le {\bar P}_{\operatorname{ITS}}^{\max}, 
% \end{align}
where ${\bar P}_{\operatorname{ITS}}^{\max}=P_{\operatorname{ITS}}^{\max}-\sum_{i=1,i\ne n}^{N}{a_i^2\left( T_{{i,i}}+\delta^2\kappa^2\right)}$. Clearly, the objective function $h\left(\mathbf a,\boldsymbol\theta\right)$ can be recast as an equivalent function with respect to the transmissive coefficients of $n$-th active TE, i.e., $a_n$ and $\theta_n$, which is given by 
\begin{align}
  f\left(a_n, \theta_n\right)=& 2\operatorname{Re}\left\{  a_n  \theta_n^*\left(\sum_{i=1,i\ne n}^{N}{\Omega_{n,i} a_i\theta_i}- c_n\right) \right\}+a_n^2\Omega_{n,n}+\delta^2\kappa^2{ a_n^2 D_{\operatorname{n,n}}}+\eta a_n^2\left( T_{\operatorname{n,n}}+\delta^2\kappa^2\right)+\chi_n,
\end{align}
where  
$\chi_n$ is a constant value with respect to the $n$-th pair of optimization variables $\left\{a_n, \theta_n\right\}$, which is given by
\begin{align}
    \chi_n=\sum_{j=1,j\ne n}^{N}{\sum_{i=1,i\ne n}^{N}{ a_j\theta_j^*\Omega_{j,i} a_i\theta_i}}+\delta^2\kappa^2\sum_{i=1,i\ne n}^{N}{a_i^2 D_{\operatorname{i,i}}}+2\operatorname{Re}\left\{ \sum_{i=1,i\ne n}^{N}{a_i \theta_i^*c_i} \right\}+\eta \sum_{i=1,i\ne n}^{N}{a_i^2\left( T_{{i,i}}+\delta^2\kappa^2\right)}.
\end{align}

Consequently, we can only investigate the following problem for sequentially optimizing a pair of values $\left\{a_n,\theta_n\right\}$ while fixing the remaining $N-1$ pairs.
By omitting the constant term $\chi_n$ in $f\left( a_n, \theta_n\right)$, which has no impact on optimizing $\left\{a_n,\theta_n\right\}$, and defining $e_n=\sum_{i=1,i\ne n}^{N}{\Omega_{n,i} a_i\theta_i}-c_n$, we have the following surrogate problem with respect to $\left\{a_n,\theta_n\right\}$, and takes the forms
\begin{subequations}
\begin{align}
    \mathcal P_{4\text{-}2}: \mathop{\min} _{a_n, \theta_n} \;&  2\operatorname{Re}\left\{a_n \theta_n^*e_n \right\}+ a_n^2\Omega_{n,n}+\delta^2\kappa^2{ a_n^2 D_{\operatorname{n,n}}}+\eta a_n^2\left(T_{\operatorname{n,n}}+\delta^2\kappa^2\right),\label{p32}\\
    \operatorname{s.t.}\;& \eqref{p312} .\label{p321}
\end{align}
\end{subequations}

Note that $\theta_n$ only appear in the constraints of \eqref{p312}, we decompose problem $\mathcal P_{4\text{-}2}$ into two subproblems as follows
\begin{subequations}
\begin{align}
      \mathcal P_{4\text{-}3\text{-}1}&:\mathop{\min} _{\theta_n}\;\operatorname{Re}\left\{ a_n \theta_n^*e_n \right\},\,\operatorname{s.t.}\eqref{p312},\\
      \mathcal P_{4\text{-}3\text{-}2}&:\mathop{\min} _{ a_n}\;\eqref{p32},\label{p331}
\end{align}
\end{subequations}
%Problem $\mathcal P_{4\text{-}3\text{-}1}$ can be tackled as follows.
In particular, an equivalent expression for problem $\mathcal P_{4\text{-}3\text{-}1}$ is given by
\begin{subequations}
\begin{align}
    \mathop{\min}_{\varphi_n}\quad& \cos\left(-\varphi_n+\angle e_n\right),\label{eq39a} \\
    \operatorname{s.t.}\quad& \varphi_n\in\left[0,2\pi\right).
\end{align}
\end{subequations}
The optimal solution of the above problem is \begin{align}\label{phase}
    \varphi_n=\angle e_n-\pi.
\end{align}

Note that Problem $\mathcal P_{4\text{-}3\text{-}2}$ is a unconstrained optimization problem, and with the obtained solution of $\varphi_n$ in \eqref{phase}, the value of \eqref{eq39a} can be calculated as $-1$, and hence, the term of $2\operatorname{Re}\left\{a_n \theta_n^*e_n \right\}$ in \eqref{p32} reduce to $-2a_n$.
Consequently, the solution to $\mathcal P_{4\text{-}3\text{-}2}$ is given in a semi-closed-form expression as follows
\begin{align}\label{an}
    a_n=\frac{1}{\delta^2\kappa^2D_{n,n}+\Omega_{n,n}+\eta\left(T_{n,n}+\delta^2\kappa^2
    \right)}.
\end{align}

% Besides, it can be verified that problem $\mathcal P_{4\text{-}3\text{-}2}$ in \eqref{p331} constitutes a convex optimization problem, which can be solved by employing CVX \cite{cvx}.  Here, we present the solution of $a_n$ in semi-closed-form expression as follows
% \begin{align}\label{an}
%     a_n\left(\tau\right)=-1/\left(\Omega_{n,n}+\delta^2\kappa^2D_{n,n}+\tau T_{n,n}+\tau\delta^2\kappa^2\right),
% \end{align}
% where $\tau$ is the Lagrangian multiplier  associated to the constraint \eqref{apbnn}, which can be searched identical with searching $\varepsilon$ in Algorithm \ref{a2}.

Based on the above discussions, a pair of $\left\{ a_n, \theta_n\right\}, n=1,\cdots,N$ can be conducted by successively optimizing with the other $N-1$ pairs being fixed, and then repeat until the convergence is attained. The details of the element-wise ASO algorithm are summarized in Algorithm \ref{a3}. It can be readily verified Algorithm \ref{a3} is guaranteed to converge and obtain a high quality suboptimal solution \cite{7946256}.

% Problem $\mathcal P_{3\text{-}3\text{-}2}$ is non-convex due to the discrete phase constraint in \eqref{p312}, to our best knowledge, the only way to obtain the global optimal phase from the discrete set is
% to perform exhaustive search, which suffers a high computational complexity. As a compromising approach, the computational efficient way to obtain the local optimal discrete phase of PS is projecting the continuous phase to the nearest discrete value in the set $\mathcal D$, which has been widely employed in existing works \cite{9389801}. As a result, problem $\mathcal P_{3\text{-}3\text{-}2}$ can be reduced to \cite{6162948!!!!TING}

\begin{algorithm}[t]
\caption{Element-wise ASO Algorithm for Calculating $\boldsymbol\Phi$} 
\label{a3} 
\begin{algorithmic}[1] 
\STATE \textbf{Initialize} $\mathbf a^0$,  $\boldsymbol\theta^0$, and iteration index $l=0$.
\STATE Let the value of the function in \eqref{p31} as $\rho^0\left(\mathbf a^0, \boldsymbol\theta^0\right)$. 
\REPEAT 
\STATE \textbf{Sequentially Optimizing:} $ \mathbf a_n^{l+1}$ and $\boldsymbol\varphi_n^{l+1}$, $\forall n \in N$, by using \eqref{an} and \eqref{phase}, respectively;\\
\STATE \textbf{Calculate} $\rho^{l+1}\left(\mathbf a^{l+1}, \boldsymbol\theta^{l+1}\right)$;
\UNTIL
$\left|\rho^{l+1}-\rho^{l}\right|\le\varepsilon$, \textbf{output} $ \mathbf a^{\star}\triangleq  \mathbf a^{l}$, $ \boldsymbol\varphi^{\star}\triangleq \boldsymbol\varphi^{l}$.
\end{algorithmic} 
\end{algorithm}

\subsection{Convergence and Complexity Analysis}

The detailed description of the proposed BCD-based joint precoding algorithm is summarized in Algorithm \ref{a4}, we update one of the variables while the others being fixed. Algorithm \ref{a4} guarantees to converge which can be proved as follows.
For each updating step, with the fixed other variables, it can be readily proved that the value of the objective function $\mathcal R\left(\mathbf W, \boldsymbol\Phi\right)$ is monotonically increasing after each updating step. Additionally, since the power budget constraints at the BS and active-ITS, $\mathcal R$ is upper bounded by a finite value, thus Algorithm \ref{a4} guarantees to converge.

Then, let us now briefly analyze the total computational complexity of Algorithm \ref{a4}. The complexities for updating $\mathbf U$ and $\mathbf F$ in steps 4 and 5 are $\mathcal O\left(KM_r^3\right)$ and $\mathcal O\left(K s^3\right)$, respectively. For step 6 that update the linear precoding matrix of the BS by employing Algorithm \ref{a2}, the main complexity contribution of Algorithm \ref{a2} is calculating the inverse operation, which is on the order of $\mathcal O\left(K M_t^3\right)$, and the number of iterations required for employing Algorithm \ref{a2} is $\mathcal I_{\operatorname{Dual}}$. The complexity of Algorithm \ref{a3} is $\mathcal O\left(N^2\right)$ for each iteration, and the number of iterations is $\mathcal I_{\operatorname{ASO}}$. Therefore, the total complexity of Algorithm \ref{a4} is $\mathcal O\left(\mathcal I_{\operatorname{BCD}}\left(KM_r^3+K s^3+\mathcal I_{\operatorname{Dual}}K M_t^3+\mathcal I_{\operatorname{ASO}}N^2\right)\right)$, where $\mathcal I_{\operatorname{BCD}}$ is the total number of iterations of Algorithm \ref{a4}.

\begin{algorithm}[t]
\caption{BCD-based Joint Precoding Algorithm for Solving Problem $\mathcal P_1$} 
\label{a4} 
\begin{algorithmic}[1]
\STATE \textbf{Initialize}    
    $\mathbf {W}^{\left ( \operatorname{0} \right )}$,
    $\mathbf A ^{\left ( \operatorname{0} \right )}$, $\boldsymbol \Theta^{\left ( \operatorname{0} \right )}$,and iteration index $t=0$.
\REPEAT
\STATE \textbf{Calculate} $\boldsymbol\Phi^{\left ( \operatorname{t} \right )}=\mathbf A^{\left ( \operatorname{t} \right )}\circ \boldsymbol\Theta^{\left ( \operatorname{t} \right )}$;
\STATE \textbf{Updating} $\mathbf {U} ^{\left ( \operatorname{t+1} \right )}$ by using \eqref{uk};
\STATE \textbf{Updating} $\mathbf {F} ^{\left ( \operatorname{t+1} \right )}$, by using \eqref{fk};
\STATE \textbf{Updating} $\mathbf {W} ^{\left ( \operatorname{t+1} \right )}$, by employing Algorithm \ref{a2};
\STATE \textbf{Updating} $\boldsymbol \Theta^{\left ( \operatorname{t+1}\right )}$ and $\mathbf A^{\left ( \operatorname{t+1}\right )}$, by employing Algorithm \ref{a3};\\
\UNTIL
{${{\left| {{\mathcal{R}^{\left( \operatorname{t+1} \right)}}- {\mathcal{R}^{\left( \operatorname{t}\right)}}} \right|}} < \varepsilon$}, \textbf{output} $\mathbf {W}^{\star} \triangleq \mathbf {W}  ^{\left ( \operatorname{t} \right )}$, $\mathbf {A}^{\star} \triangleq \mathbf {A}  ^{\left ( \operatorname{t} \right )}$, and $\boldsymbol{ \Theta}^{\star} \triangleq \boldsymbol{\Theta}^{\left (  \operatorname{t} \right )}$.
\end{algorithmic} 
\end{algorithm}

\section{The Proposed Block-Amplifying architecture of active-ITS}\label{sec3block}

In this section, we provide a block-amplifying architecture of active-ITS to partially remove the circuit components for power-amplifying, which is beneficial for reducing the surface size and hardware cost. 
More specifically, as illustrated in Fig. \ref{f3}, the total active TEs are divided into several blocks, and the TEs in each block are assumed to share a same power amplifier. 
Compared with element-amplifying architecture, the block-amplifying architecture can reduce the scale of the power-amplifying circuit, which makes it further suitable for application to the size-limited scenario. In addition, by partially removing the circuit components for power amplifying, the total power consumption of the block-amplifying architecture reduce to $ NP_{\operatorname{SW}}+R P_{\operatorname{DC}}$, where $R$ ($R\le N$) denotes the number of blocks,
$P_{\operatorname{SW}}$ represent the power consumption for switching phase-shifting of each TE, and $P_{\operatorname{DC}}$ is the direct current biasing power for power-amplifying of each block \cite{9734027}.
Clearly, the power consumption of the block-amplifying architecture is less than that of the element-amplifying architecture, i.e., $N\left(P_{\operatorname{SW}}+P_{\operatorname{DC}}\right)$, which means that the more power will be used to perform the amplitude-magnifying, while it can compensate for some inevitable performance loss caused by such a architecture. For example, we consider a block-amplifying ITS with $N=60$ ,$R=10$, $P_{ITS}^{\max}=15$ dBm (31.6 mW), and follow the setting in \cite{9377648,9734027} that $P_{\operatorname{SW}}=-10$ dBm (0.1 mW) and $P_{\operatorname{DC}}=-5$ dBm (0.316 mW), and hence the total power consumption of the active-ITS for the element-amplifying architecture and block-amplifying architecture are 24.96 mW and 9.16 mW, and corresponding the power used for performing amplitude-magnifying are $6.64$ mW and $22.44$ mW, respectively.   
%The relationship between $P_{\operatorname{BS}}^{\max}$ and $P_{\operatorname{ITS}}^{\max}$ is given by
% \begin{align}
%     P_{\operatorname{ITS}}^{\max}=P_{\operatorname{ToT}}^{\max}-P_{\operatorname{BS}}^{\max}-N\left(P_{\operatorname{SW}}+P_{\operatorname{DC}}\right),
% \end{align}
% where $P_{\operatorname{ToT}}^{\max}$ is the total power budget,
% \textit{Remark 2:} From a physical point of view, the element-amplifying architecture is a superset of the block-amplifying architecture, since the block-amplifying architecture is a special structure of the element-amplifying architecture. But mathematically, the block-amplifying architecture is a superset of the element-amplifying architecture as the element-amplifying architecture can be interpreted as a special case ($R=N$) of the block-amplifying architecture.

%\textcolor{red}{As a result, block-amplifying active-ITS can guarantee to provide at least the same array gains compared to passive-ITS under the same number of TEs.}
% \textcolor{red}{Besides, for the active-ITS with $\mathcal N$ TEs, the amplifying power budge is defined by
% \begin{align}
%     P_{\operatorname{ITS}}^{\max}\triangleq\varpi\left( P_{\operatorname{TOT}}^{\max}-P_{\operatorname{BS}}^{\max}-\mathcal N\left(P_A+P_P\right)\right),
% \end{align}
% where $\varpi$ denotes the amplifier efficiency, $P_{\operatorname{TOT}}^{\max}$ is the total power budget of the outdoor-to-indoor mmWave communication scenario, $P_A$ and $P_P$ represent the circuit power consumption of each active TE for phase-shifting and power-amplifying, respectively.}

\begin{figure}
    \centering
    \includegraphics[width=.5\linewidth]{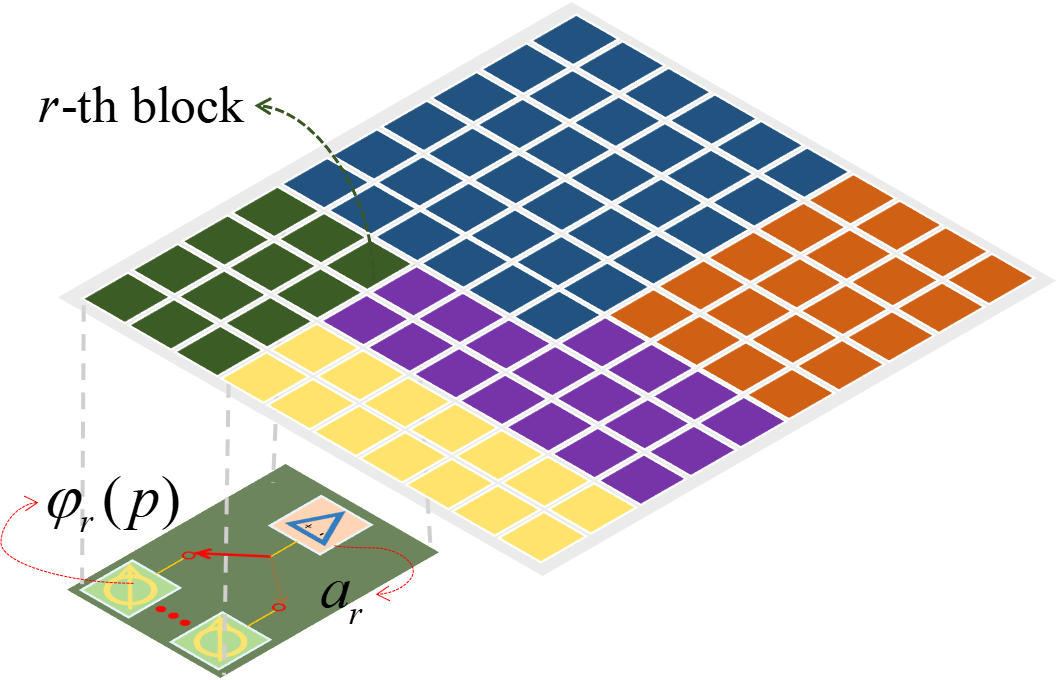}
    \caption{Block-amplifying architecture of active-ITS.}
 \label{f3} 
\end{figure}

Note that the variables $\mathbf U$, $\mathbf F$, $\mathbf W$, and $\mathbf \Theta$ can be directly obtained as that under element-amplifying architecture case, and hence, the remaining work is to optimize the power amplifying factor matrix, i.e., $\mathbf A$, under the block-amplifying architecture case. In the following, we extend the proposed ASO algorithm from the element-wise form into the block-wise form, which can optimize $\mathbf A$ in a block-by-block manner.

\textbf{\textit{Remark 2:}} From a mathematical point of view, since the element-amplifying architecture can be interpreted as a special case ($R=N$) of the block-amplifying architecture, the proposed ASO Algorithm under the block-wise form is a superset of the element-wise form.

Particularly, we partition the power amplifying factor $\mathbf {a}$ into $R$ blocks, i.e., $\mathbf {a}=\left [\mathbf { a}_1^{\operatorname{T}},\mathbf {a}_2^{\operatorname{T}}, \cdots,\mathbf { a}_R^{\operatorname{T}}\right]\in \mathbb R^{N\times 1}$, where $\mathbf { a}_r\in \mathbb R^{{\hat N}_r\times 1}$ is the power amplifying factor vector corresponding to ${\hat N}_r$ TEs in the $r$-th block, which is given by
\begin{align}\label{hata}
    \mathbf {a}_r=\left[ {{\mathbf  a}_{r}\left(1\right),{\mathbf a}_{r}\left(2\right),
    \cdots,{\mathbf a}_{r}\left({\hat N}_r\right)}\right]^{\operatorname{T}}={a}_r\boldsymbol 1_{{\hat N}_r}^{\operatorname{T}},
\end{align}
where $\mathbf a_r\left({m}\right)$ is the $\left(\sum_{l=1}^{r-1}{{\hat N}_l}+m\right)$-th element of $\mathbf a$ with $r=2,3,\cdots,R$ and $m=1,2,\cdots,{\hat N}_r$.

Here, let us revisit problem $\mathcal P_{4\text{-}1}$ stated in \eqref{p31p}. First, $\left(\mathbf a^{\operatorname{T}}\circ\boldsymbol{\theta}^{\operatorname{H}}\right)\mathbf \Omega\left(\mathbf a  \circ  \boldsymbol{\theta}\right)$ can be expanded as
\begin{align}\label{b2b}
   &\left(\mathbf a^{\operatorname{T}}\circ\boldsymbol{\theta}^{\operatorname{H}}\right)\mathbf \Omega\left(\mathbf a  \circ  \boldsymbol{\theta}\right) \notag\\
   =& \sum_{i=1,i\ne r}^{R}{\left(\mathbf a^{\operatorname{T}}  \circ  \boldsymbol{\theta}^{\operatorname{H}} \right)\boldsymbol\Omega_{i}\left(\mathbf a_i\circ\boldsymbol\theta_i\right)}+\left(\mathbf a^{\operatorname{T}}  \circ  \boldsymbol{\theta}^{\operatorname{H}} \right)\boldsymbol\Omega_{r}\left(\mathbf a_r\circ\boldsymbol\theta_r\right)\notag\\
   =&\sum_{i=1,i\ne r}^{R}{\left(\mathbf a_i^{\operatorname{T}}\circ\boldsymbol\theta_i^{\operatorname{H}}\right)\boldsymbol\Omega_{i,r}\left(\mathbf a_r\circ\boldsymbol\theta_r\right)}+\sum_{i=1,i\ne r}^{R}{\sum_{j=1}^{R}{\left(\mathbf a_j^{\operatorname{T}}\circ\boldsymbol\theta_j^{\operatorname{H}}\right)\boldsymbol\Omega_{j,i}\left(\mathbf a_i\circ\boldsymbol\theta_i\right)}}+\left(\mathbf a_r^{\operatorname{T}}\circ\boldsymbol\theta_r^{\operatorname{H}}\right)\boldsymbol\Omega_{r,r}\left(\mathbf a_r\circ\boldsymbol\theta_r\right),\notag\\
   =&\sum_{i=1,i\ne r}^{R}{\left(\left(\mathbf a_i^{\operatorname{T}}\circ\boldsymbol\theta_i^{\operatorname{H}}\right)\boldsymbol\Omega_{i,r}\left(\mathbf a_r\circ\boldsymbol\theta_r\right)+{\left(\mathbf a_i^{\operatorname{T}}\circ\boldsymbol\theta_i^{\operatorname{H}}\right)\boldsymbol\Omega_{i,r}\left(\mathbf a_r\circ\boldsymbol\theta_r\right)}\right)}
   \notag\\
   &+\sum_{i=1,i\ne r}^{R}{\sum_{j=1}^{R}{\left(\mathbf a_j^{\operatorname{T}}\circ\boldsymbol\theta_j^{\operatorname{H}}\right)\boldsymbol\Omega_{j,i}\left(\mathbf a_i\circ\boldsymbol\theta_i\right)}}+\left(\mathbf a_r^{\operatorname{T}}\circ\boldsymbol\theta_r^{\operatorname{H}}\right)\boldsymbol\Omega_{r,r}\left(\mathbf a_r\circ\boldsymbol\theta_r\right),
\end{align}
where $\boldsymbol\Omega=\left[\boldsymbol\Omega_1, \boldsymbol\Omega_2,\cdots,\boldsymbol\Omega_R \right]$ and $\boldsymbol\theta=\left[\boldsymbol\theta_1^{\operatorname{T}}, \boldsymbol\theta_2^{\operatorname{T}},\cdots,\boldsymbol\theta_R^{\operatorname{T}} \right]^{\operatorname{T}}$, with 
\begin{subequations}
\begin{align}
    \boldsymbol\Omega_r&=\left[\boldsymbol\Omega_{1,r}, \boldsymbol\Omega_{2,r},\cdots,\boldsymbol\Omega_{R,r} \right]^{\operatorname{T}}\in\mathbb C^{N\times {\hat N}_r},\\
    \boldsymbol\theta_r&=\left[\boldsymbol\theta_{r}\left(1\right), \boldsymbol\theta_{r}\left(2\right),\cdots,\boldsymbol\theta_{r\left({{\hat N}_r}\right)} \right]^{\operatorname{T}}\in\mathbb C^{ {\hat N}_r\times 1},
\end{align}
\end{subequations}
where $\boldsymbol\theta_r\left({m}\right)$ denotes the $\left(\sum_{l=1}^{r-1}{{\hat N}_l}+m\right)$-th element of $\boldsymbol \theta$ with $r=2,3,\cdots,R$ and $m=1,2,\cdots,{\hat N}_r$.
% \begin{align}
%     \boldsymbol\Omega_{i,r} =\left[
% \begin{array}{ccc}
%     \Omega_{i_1,r_1}  &\cdots&\Omega_{i_1,r_{{\hat N}_r}}  \\
%      \vdots   & \ddots &\vdots  \\
%     \Omega_{i_{{\hat N}_i},r_1}  &\cdots&\Omega_{i_{{\hat N}_i},r_{{\hat N}_r}}  
%     \end{array}
%     \right]\in\mathbb C^{{\hat N}_i\times {\hat N}_r}, \forall i\in R.
% \end{align}

By employing the property $\boldsymbol\Omega_{i,r}=\boldsymbol\Omega_{i,r}^{\operatorname{H}}$
, \eqref{b2b} can be rewritten as
\begin{align}\label{b2b2}
   &\left(\mathbf a^{\operatorname{T}}\circ\boldsymbol{\theta}^{\operatorname{H}}\right)\mathbf \Omega\left(\mathbf a  \circ  \boldsymbol{\theta}\right) \notag\\
   =&2\operatorname{Re}\left\{\sum_{i=1,i\ne r}^{R}{\left(\mathbf a_i^{\operatorname{T}}\circ\boldsymbol\theta_i^{\operatorname{H}}\right)\boldsymbol\Omega_{i,r}\left(\mathbf a_r\circ\boldsymbol\theta_r\right)}\right\}
   +\sum_{i=1,i\ne r}^{R}{\sum_{j=1}^{R}{\left(\mathbf a_j^{\operatorname{T}}\circ\boldsymbol\theta_j^{\operatorname{H}}\right)\boldsymbol\Omega_{j,i}\left(\mathbf a_i\circ\boldsymbol\theta_i\right)}}\notag\\
   &+\left(\mathbf a_r^{\operatorname{T}}\circ\boldsymbol\theta_r^{\operatorname{H}}\right)\boldsymbol\Omega_{r,r}\left(\mathbf a_r\circ\boldsymbol\theta_r\right).
\end{align}

Then, note that the term $\sum_{i=1,i\ne r}^{R}{\left(\mathbf a_i^{\operatorname{T}}\circ\boldsymbol\theta_i^{\operatorname{H}}\right)\boldsymbol\Omega_{i,r}\left(\mathbf a_r\circ\boldsymbol\theta_r\right)}$ in \eqref{b2b2} can be expanded by
\begin{align}
\sum_{i=1,i\ne r}^{R}{\left(\mathbf a_i^{\operatorname{T}}\circ\boldsymbol\theta_i^{\operatorname{H}}\right)\boldsymbol\Omega_{i,r}\left(\mathbf a_r\circ\boldsymbol\theta_r\right)}
=&\sum_{i=1,i\ne r}^{R}{a_ra_i\left(\boldsymbol 1_{{\hat N}_r}^{\operatorname{T}}\circ\boldsymbol\theta_r^{\operatorname{H}}\right)\boldsymbol\Omega_{r,i}\left(\boldsymbol 1_{{\hat N}_i} \circ\boldsymbol\theta_i\right)}\notag\\
=&\sum_{i=1,i\ne r}^{R}{a_ra_i\boldsymbol\theta_r^{\operatorname{H}}\boldsymbol\Omega_{r,i}\boldsymbol\theta_i}.
\end{align}
Similarly, we have 
\begin{subequations}
\begin{align}
\left(\mathbf a_r^{\operatorname{T}}\circ\boldsymbol\theta_r^{\operatorname{H}}\right)\boldsymbol\Omega_{r,r}\left(\mathbf a_r\circ\boldsymbol\theta_r\right)&= a_r^2\boldsymbol\theta_r^{\operatorname{H}}\boldsymbol\Omega_{r,r}\boldsymbol\theta_r,\\
\left(\mathbf a^{\operatorname{T}}\circ\boldsymbol{\theta}^{\operatorname{H}}\right)\mathbf c 
&= a_r\boldsymbol{\theta}_r^{\operatorname{H}}\mathbf c_r  +\sum\nolimits_{i=1,i\ne r}^{R}{a_i\boldsymbol{\theta}_i^{\operatorname{H}}\mathbf c_i}.
\end{align}
\end{subequations}

By defining $\mathbf d=\operatorname{Vecd}\left(\mathbf D\right)$ and $\mathbf t=\operatorname{Vecd}\left(\mathbf T\right)$, and then partition the two vector into $R$ blocks, i.e., $\mathbf {d}=\left [\mathbf { d}_1^{\operatorname{T}},\mathbf {d}_2^{\operatorname{T}}, \cdots,\mathbf { d}_R^{\operatorname{T}}\right]\in \mathbb C^{N\times 1}$ and $\mathbf {t}=\left [\mathbf { t}_1^{\operatorname{T}},\mathbf {t}_2^{\operatorname{T}}, \cdots,\mathbf { t}_R^{\operatorname{T}}\right]\in \mathbb C^{N\times 1}$, we have
\begin{subequations}
\begin{align}
    \operatorname{Tr}\left(\boldsymbol\Phi^{\operatorname{H}}\mathbf D\boldsymbol\Phi\right)&=a_r^2 \sum_{p=1}^{{\hat N}_r}{\mathbf d_{r}\left(p\right)}+\sum_{i=1,i\ne r}^{R}{a_i^2 \sum_{q=1}^{{\hat N}_i}{\mathbf d_{i}\left(q\right)}},\\
    \operatorname{Tr}\left(\boldsymbol\Phi^{\operatorname{H}}\boldsymbol\Phi\mathbf T\right)&=a_r^2 \sum_{p=1}^{{\hat N}_r}{\mathbf t_{r}\left(p\right)}+\sum_{i=1,i\ne r}^{R}{a_i^2 \sum_{q=1}^{{\hat N}_i}{\mathbf t_{i}\left(q\right)}}.
\end{align}
\end{subequations}
Similarly, we have
\begin{align}
    \operatorname{Tr}\left(\boldsymbol\Phi^{\operatorname{H}}\boldsymbol\Phi\right)&=a_r^2{\hat N}_r+ \sum\nolimits_{i=1,i\ne r}^{R}{a_i^2 {{\hat N}_i}}.
\end{align}

Consequently, problem $\mathcal P_{4\text{-}1}$ can be reformulated by
\begin{subequations}
\begin{align}
   \mathcal P_{5}: \mathop{\min} _{a_r} \;&  f\left(a_r\right)=a_r^2m_r+2\operatorname{Re}\left\{a_r z_r\right\}+\chi_{ r},\label{psa}\\
    \operatorname{s.t.}\;& a_r^2g_r\le {\hat P}_{\operatorname{ITS}}^{\max},\label{psa1}
\end{align}
\end{subequations}
where 
%$\chi_{r}$ is a constant with respect to $a_r$, and 
\begin{subequations}
\begin{align}
    m_r&=\boldsymbol\theta_r^{\operatorname{H}}\boldsymbol\Omega_{r,r}\boldsymbol\theta_r+ \delta^2\kappa^2\sum_{p=1}^{{\hat N}_r}{\mathbf d_{r}\left(p\right)},\\ z_r&=\sum_{i=1,i\ne r}^{R}{a_i\boldsymbol\theta_r^{\operatorname{H}}\boldsymbol\Omega_{r,i}\boldsymbol\theta_i}-\boldsymbol{\theta}_r^{\operatorname{H}}\mathbf c_r,\\
    g_r&=\sum_{p=1}^{{\hat N}_r}{\mathbf t_{r}\left(p\right)}+\kappa^2\delta^2{\hat N}_r,\\
    {\hat P}_{\operatorname{ITS}}^{\max}&=P_{\operatorname{ITS}}^{\max}-\sum_{i=1,i\ne r}^{R}{a_i^2 \sum_{q=1}^{{\hat N}_i}{\mathbf t_{i}\left(q\right)}}-\kappa^2\delta^2\sum_{i=1,i\ne r}^{R}{a_i^2 {{\hat N}_i}},
\end{align}
\end{subequations}
and where $\chi_r$ denotes a constant value with respect to $a_r$ and does not affect its solution, and is given by
\begin{align}
    \chi_r=\sum_{i=1,i\ne r}^{R}{\sum_{j=1}^{R}{ a_ja_i\boldsymbol\theta_j^{\operatorname{H}}\boldsymbol\Omega_{j,i}\boldsymbol\theta_i}}+2\operatorname{Re}\left\{\sum_{i=1,i\ne r}^{R}{a_i\boldsymbol{\theta}_i^{\operatorname{H}}\mathbf c_i}\right\}+\kappa^2\delta^2\sum_{i=1,i\ne r}^{R}{a_i^2 \sum_{q=1}^{{\hat N}_i}{\mathbf d_{i}\left(q\right)}}.
\end{align}
% and $\chi_{r}$ is a constant with respect to $a_r$. Besides, $g_r=\sum_{p=1}^{{\hat N}_r}{\mathbf t_{r}\left(p\right)}+\kappa^2\delta^2{\hat N}_r$ and ${\hat P}_{\operatorname{ITS}}^{\max}=P_{\operatorname{ITS}}^{\max}-\sum_{i=1,i\ne r}^{R}{a_i^2 \sum_{q=1}^{{\hat N}_i}{\mathbf t_{i}\left(q\right)}}-\sum_{i=1,i\ne r}^{R}{a_i^2 {{\hat N}_i}}$.
%$m_r=\left|\sum_{r_p=1}^{{\hat N}_r}{\sum_{r_q=1}^{{\hat N}_r}{e^{j\left(\varphi_{r_q}-\varphi_{r_p}\right)}\Omega_{r_p,r_q}}}+ \delta^2\sum_{r_p=1}^{{\hat N}_r}{d_{r_p}}\right|$
The above problem can be solved similarly to \eqref{an} and hence it is omitted for brevity. 

Based on the above discussions, $\mathbf a_r,\forall r$ can be conducted by successively optimizing with the other $R-1$ blocks being fixed, and then repeat until the convergence is attained \cite{7946256}. %The corresponding computational complexity for optimizing the power amplifying factor matrix is on the order of $\mathcal O\left(R^2\right)$.

%\log_2\left(\frac{\zeta_n}{\varepsilon}\right)

\section{Simulation Result}\label{sec4}
\begin{figure}
    \centering
    \includegraphics[width=0.48\linewidth]{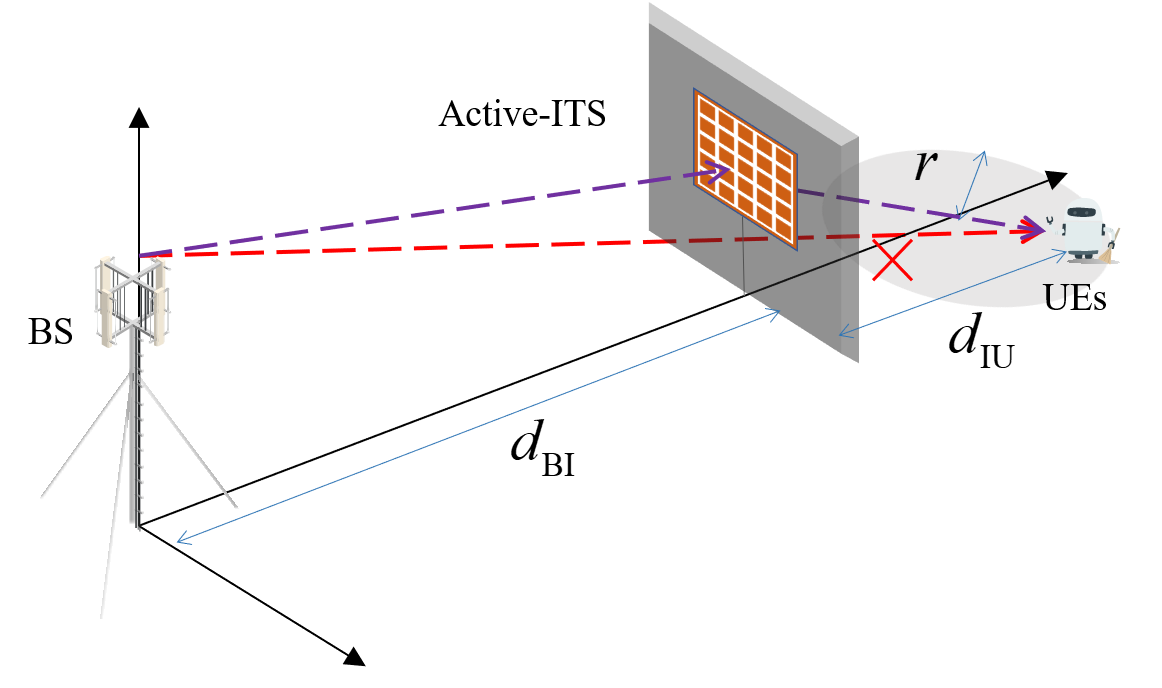}
    \caption{The simulated active-ITS empowered outdoor-to-indoor mmWave communication.}
 \label{f2} 
\end{figure}

In this section, numerical results are provided to evaluate the WSR performance achieved by the proposed BCD-based joint precoding algorithm under the two architectures of active-ITSs.

\subsection{Simulation Setup}
We consider a three-dimensional setup as illustrated in Fig. \ref{f2}, where the BS at the outdoor communicates with UEs at the indoor with the aid of an active-ITS. The BS, active-ITS, and UEs are assumed in a straight line, and heights of them are 3m, 6m, and 1.5m, respectively, and the total horizontal distances $d_{\text{BU}}=d_{\text{BI}}+d_{\text{IU}}=50$ m. The $K=4$ UEs are uniformly and randomly distributed in a
circle centered at $(0\operatorname{m},50\operatorname{m})$ with a radius of $5$ m. 

%The TEs of active-ITS, antennas of BS, and antennas of each UE are $\mathcal N = 5\times 5=25$, $M_t = 2\times 2=4$, and $M_r = 1$, respectively.

% \begin{table}[t]
% \small
% \centering
% \caption{Simulation Parameters}
% \label{simu}
% \begin{tabular}{|l|l|}
% \hline
% \textbf{Parameters} &\textbf{Values}\\ \hline
%   Circuit power consumption for phase-shifting $P_P$&  $2$ dBm\\
% Circuit power consumption for power-amplifying $P_A$&  $5$ dBm\\
%  Power budget of BS $P_{\operatorname{BS}}^{\max}$&$30$ dBm  \\
%  Power budget of active-ITS $P_{\operatorname{ITS}}^{\max}$& $30$ dBm \\
%  Noise power $\sigma^2$ and $\delta^2$& $-90$ dBm \\
%  Number of TEs $\mathcal N$ & 20\\
% % Inverse of amplifier efficiency $\varpi$ &  1.25\\
%  Penetration efficiency $\kappa$ &  0.8\\
% Target accuracy $\varepsilon$ & $10^{-3}$\\
% \hline
% \end{tabular}
% \end{table}

For array response vector, the azimuth angles of arrival $\upsilon^{\operatorname{AOA}}$ and departure $\upsilon^{\operatorname{AOD}}$ are uniformly distributed in the interval $\left[-\pi, \pi\right)$. The elevation angles of arrival $\vartheta^{\operatorname{AOA}}$ and departure $\vartheta^{\operatorname{AOD}}$ are uniformly distributed in the interval $\left[-\pi/2, \pi/2\right)$. 
The complex gains of LOS paths $\alpha_l,\forall l$ and $\beta_p,\forall p$ are independently distributed with $\mathcal C\mathcal N\left(0, 10^{-0.1{\operatorname{PL}}}\right)$, where the distance-dependent path loss $\operatorname{PL}$ for both the LoS and the NLoS paths can be modeled by 
\begin{align}\label{loss}
     \operatorname{PL}=\operatorname{PL_0}+10b\log_{10}\left(D\right)+\zeta \, \text{dB}, 
\end{align} 
where $D$ denotes the individual link distance and $\zeta$ denotes lognormal shadow fading following $\zeta\sim \mathcal N\left(0, \sigma_{\zeta}^2\right)$. As the real-world channel measurements for the carrier frequency of 28 GHz suggested in \cite[Table I]{6834753}, the values of the parameters in \eqref{loss} are set as $\operatorname{PL_0}=61.4$, $b=2$, and $\sigma_{\zeta}=5.8$ dB for the LoS paths, and $\operatorname{PL_0}=72$, $b=2.92$, and $\sigma_{\zeta}=8.7$ dB for NLoS paths.    
The bandwidth $B=251$ MHz, the noise power is  $\sigma^2=-174 +10\log_{10} B=-90$ dBm. 
Unless otherwise stated, the other adopted simulation parameters are set as follows: the distance between the BS and ITS of $d_{\operatorname{BI}}=45$ m, the circuit power consumption for phase-shifting and power-amplifying are $P_{\operatorname{SW}}=-10$ dBm and $P_{\operatorname{DC}}=-5$ dBm \cite{9377648}, the power budgets of the BS and active-ITS are $P_{\operatorname{BS}}^{\max}=30$ dBm and $P_{\operatorname{ITS}}^{\max}=30$ dBm, the number of TEs of $N=10\times 4$, the number of antennas of $M_t=4\times 2$ and $M_r=2\times 2$, the weighting factor
of $\alpha_k=1, \forall k$, the penetration efficiency of $\kappa=0.8$, and the target accuracy of $\varepsilon=10^{-3}$.

Without loss of generality, for the block-amplifying ITS architecture, we adopt an evenly-block strategy, where the number of TEs in each block is equal and set as 5 and 10, i.e., ``Block-amplifying ITS, $N_B=5$'' and ``Block-amplifying ITS, $N_B=10$''. We also introduce a baseline scheme ``Block-amplifying ITS, $N_B=N$'', where all TEs of ITS share the same power amplifier, so all the active TEs have identical amplitudes, i.e., $a_n=a,\forall n\in N$ \cite{9734027}. 
In addition, due to that the active-ITS requires an additional power, i.e., $P_{\operatorname{ITS}}^{\max}$, and hence, for fair comparisons, we assume that the power budget at the BS for the passive-ITS scheme (referred as ``Without-amplifying ITS'') is allocated an additional amount of $P_{\operatorname{ITS}}^{\max}$, i.e., $P_{\operatorname{BS,passive}}^{\max}=P_{\operatorname{ITS}}^{\max}+P_{\operatorname{BS,active}}^{\max}$, so that all the compared schemes have the
same total power consumption.

\begin{figure}
    \centering
    \includegraphics[width=0.5\linewidth]{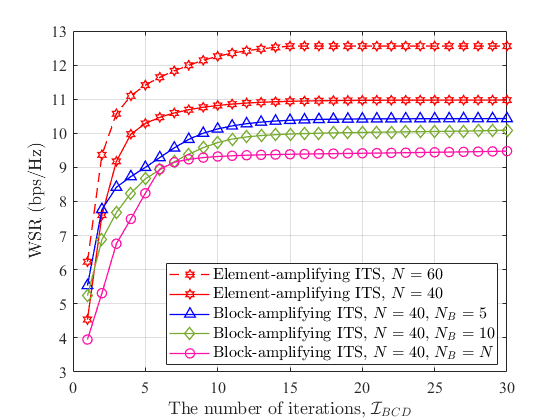}
    \caption{WSR against the number of iterations, $\mathcal I_{BCD}$.}
 \label{fiteration} 
\end{figure}

\subsection{Convergence Behavior}
In this subsection, we investigate the convergence behavior of the proposed BCD-based algorithm for the different architectures of active-ITSs with different numbers of TEs.
As shown in Fig. \ref{fiteration}, we plot the curves of the WSR performance achieved by the proposed algorithm for the active-ITS schemes (include {Element-amplifying ITS} and {Block-amplifying ITS}) against the number of iterations. 
In particular, it can be observed that the curve with $N=60$ TEs converges slower than that of $N=40$ TEs, and the curves of both the element-amplifying architecture and the block-amplifying architecture converge to corresponding stationary points after a few iterations. The observation is consistent with our expectations. It can be concluded that the proposed algorithm has good convergence behavior and the convergence speed is sensitive to the number of TEs of an active-ITS, and the impact of $N_B$ on the convergence speed is slight.

\subsection{The impact of key parameters on WSR performances}

In this subsection, we study the WSR performance achieved by the proposed algorithms for both the active-ITS scheme and the passive-ITS scheme (i.e., Without-amplifying ITS) versus the key parameters of the considered system setting. All the simulation results as follows are averaged over 1,000 independent channel realizations.

\begin{figure}
    \centering
    \includegraphics[width=0.48 \linewidth]{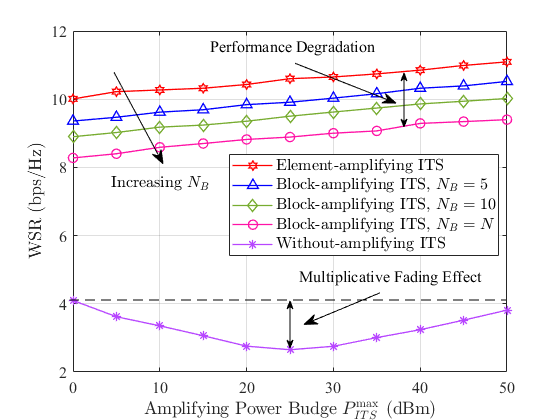}
    \caption{WSR against the position of the active-ITS, $d_{\operatorname{BI}}$.}
 \label{fdis} 
\end{figure}
In Fig. \ref{fdis}, we investigate the WSR performance for different locations of the ITS by varying the distance between the BS and ITS, i.e., $d_{\operatorname{BI}}$. One important observation is that with increasing $d_{\operatorname{BI}}$, the WSR performance achieved by the active-ITS scheme monotonically increase, which is different from the passive-ITS scheme, which achieve the highest and the worst WSR performances when passive-ITS are deployed close to the BS(UEs) and at the middle location due to the multiplicative fading effect. 
%This observation can be explained by that when the active-ITS far away from the BS, the received mmWave signal power at active-ITS become weaker, and then, according to the constraint \eqref{p12}, the higher amplification gain can be obtained at active TEs for compensating for the multiplicative path loss effect.
 This interesting observation indicates that the active-ITS should to be deployed closer to UEs for fully unleashing its potential on achieving a better WSR performance. 
Intuitively speaking, the deployment strategy meets the requirements of a practical outdoor-to-indoor communication scenario where the active-ITS is usually very close to the UEs, i.e., $d_{\operatorname{IU}} \ll  d_{\operatorname{BI}}$.
In addition, it can be observed that the active-ITS schemes perform constantly better than passive-ITS scheme in all locations, which implies that the active-ITS brings a significant performance gain compared with the passive-ITS due to the amplified signal is only attenuated once. 
More importantly, it can be seen that the WSR of ``Element-amplifying ITS'' scheme outperforms that of the ``Block-amplifying ITS'' schemes, which can be attributed to that the block-amplifying architecture of active-ITS inevitably lost some DoF for proactively configuring the wireless channels. This observation indicates the importance of carefully optimizing the power amplification factor matrix at the active-ITS for enhancing the WSR performance. Nevertheless, considering that such a block-amplifying architecture of active-ITS can reduce the scale of the power-amplifying circuit for application to space-limited scenarios, the resulting slight performance gap is acceptable.

The WSR performances versus the number of TEs with $P_{\operatorname{ITS}}^{\max}=30$ dBm and $P_{\operatorname{ITS}}^{\max}=20$ dBm are shown in Fig. \ref{ftes}. It illustrates that with the increasing number of TEs, both curves of the active-ITS and the passive-ITS schemes monotonically increases, since more TEs introduce more DoFs to proactively configure the wireless channel which yields a higher WSR performance. Meanwhile, it also illustrates the superiority of the active-ITS that brings a large benefit to WSR performance compared with the passive-ITS. Particularly, the usage of active ITSs can greatly reduce the number of TEs compared with passive ITS case for reaping a given performance level, and hence
greatly decrease both the size and the complexity of ITS, which is beneficial for embedding in space-limited and aesthetic-needed building structures. It can be explained by that when equipped a small number of active TEs in the active-ITS, according to the constraint in \eqref{p12}, a larger power amplification factor will be allocated at each active TE.
%In this way, the higher amplification gain can be obtained at active TEs for compensating for the multiplicative path loss effect.
In addition, it can be observed that with a small number of TEs, the performance degradation between element-amplifying ITS and block-amplifying ITS is slight with $P_{\operatorname{ITS}}^{\max}=20$ dBm, while the performance gap in the large $N$ regime also is acceptable. This observation is because that for a large number of TEs at the active-ITS, the block-amplifying architecture has more power to amplify the signal, which can compensate for some of the loss of performance.
This observation highlights the effectiveness of the proposed block-amplifying architecture.

\begin{figure}
    \centering
    \includegraphics[width=0.48\linewidth]{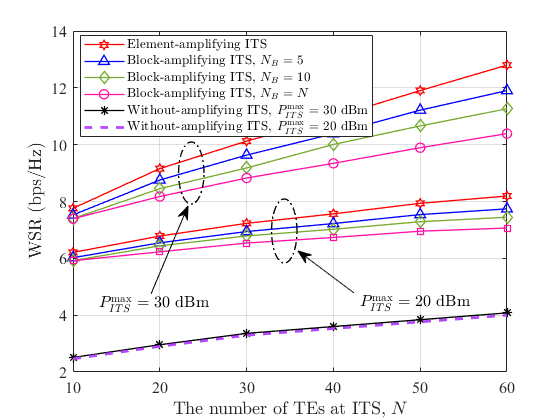}
    \caption{WSR against the number of TEs.}
 \label{ftes} 
\end{figure}

We plot the WSR performance against the amplifying power budget at ITS in Fig. \ref{fpower}. As expected, it can be observed that all the WSR performances of all schemes increase as $P_{\operatorname{ITS}}^{\max}$ increase. More specifically, the WSR performances achieved by the active-ITS schemes outperform that of passive-ITS, which indicates that the active-ITS can exploit a part of the power for fighting with the multiplicative path loss effect influence and achieving a higher WSR performance than the passive-ITS scheme for the same power consumption. In addition, it can be concluded from Fig. \ref{fpower} that the active-ITS scheme can significantly reduce the power consumption compared to the passive-ITS scheme when achieving a given performance level. More importantly, it can be observed that the performance degradation caused by the block-amplifying architecture of the active-ITS is slight when $P_{ITS}^{\max}= 15$ dBm, which is because that the more power is used for the circuit static power consumption in the small $P_{ITS}^{\max}$ regime.

%Furthermore, this figure also demonstrates that the block-amplifying architecture of active-ITS causes a WSR performance loss compared with that of the element-amplifying architecture, which can be attributed to the block-amplifying architecture of active-ITS lost some DoF for proactively configuring the wireless channels. 
%Nevertheless, this design is also effective on weakening the severe attenuation influence in the outdoor-to-indoor mmWave communication.

\begin{figure}
    \centering
    \includegraphics[width=0.48\linewidth]{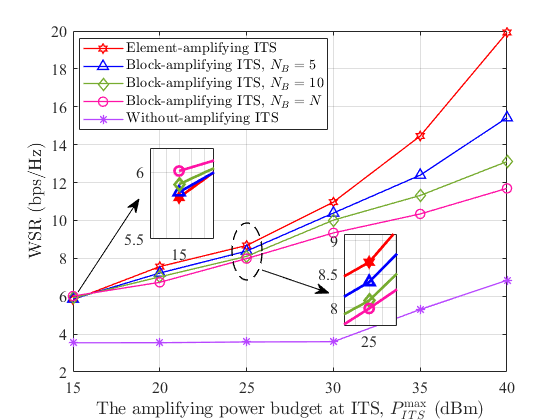}
    \caption{WSR against the amplifying power budget $P_{\operatorname{ITS}}^{\max}$ at the active-ITS.}
 \label{fpower} 
\end{figure}

\begin{figure}
    \centering 
    \includegraphics[width=0.48\linewidth]{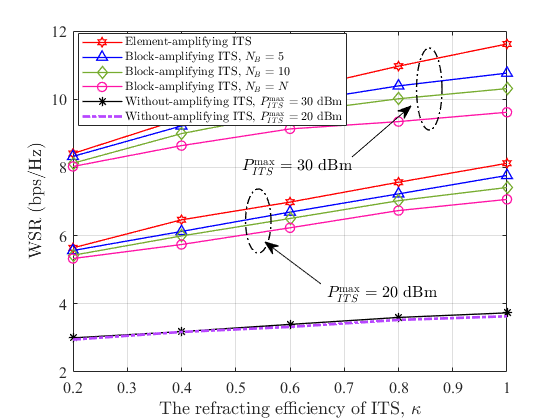}
    \caption{WSR against the refracting efficiency $\kappa$.}
 \label{fkappa} 
\end{figure}
Fig. \ref{fkappa} illustrates the WSR performance against the refracting efficiency of ITSs. 
It can be observed that the refracting efficiency of ITS has a substantial impact on the WSR performance, whereas expected, with the increasing $\kappa$, the WSR achieved by the active-ITS schemes increased significantly. 
The refracting efficiency of an ITS is determined by the hardware components, which is used to characterize the power loss caused by signal absorption and signal reflection when the mmWave signal impinges upon and penetrate the ITS. This figure demonstrates the benefits of active-ITS that can exploit the additional power amplifiers to compensate for the power loss caused by the practical hardware component.

\section{Conclusion}\label{sec5}

    In this paper, an active-ITS empowered outdoor-to-indoor mmWave communication system was investigated, where the active-ITS benefited on greatly reducing the number of TEs while maintaining a given performance compared with the passive-ITS. To jointly optimize the linear precoding matrix of the BS and transmissive coefficients of the active-ITS, we formulated a WSR maximization problem, and then a BCD-based joint precoding algorithm was proposed for solving it. In order to further reduce the size and hardware cost of active-ITS, a block-amplifying architecture was proposed to partially remove the hardware components for power amplifying. And then we extended the proposed BCD-based joint precoding algorithm into the block-wise form for optimizing the transmissive coefficients of the active-ITS under the block-amplifying architecture. Simulation results demonstrated that the active-ITS could significantly enhance the system performance, and the inevitable performance degradation caused by the block-amplifying ITS architecture was acceptable. 
    %Moreover, the results indicated that active-ITS should to be deployed close to UEs for achieving a higher WSR.

\ifCLASSOPTIONcaptionsoff
  \newpage
\fi

\bibliography{ref}
\end{document}